\documentclass{article}

\usepackage{microtype}
\usepackage{graphicx}
\usepackage{subcaption}
\usepackage{booktabs} 

\usepackage{tikz}
\usepackage{pgfplots}
\usepackage{bbm}
\usepackage{footnote}
\usepackage{url}
\usepackage{xspace}
\usepackage{enumitem}

\usepackage{hyperref}

\usepackage[accepted]{icml2026}

\usepackage{amsmath}
\usepackage{amssymb}
\usepackage{mathtools}
\usepackage{amsthm}

\usepackage[capitalize,noabbrev]{cleveref}

\theoremstyle{plain}
\newtheorem{theorem}{Theorem}[section]

\newtheorem{lemma}[theorem]{Lemma}

\theoremstyle{definition}

\theoremstyle{remark}
\newtheorem{remark}[theorem]{Remark}

\def\Exp{\mathbb{E}}

\newcommand{\bigO}{\ensuremath{\mathcal{O}}\xspace}
\newcommand{\tO}{\ensuremath{\widetilde{\mathcal{O}}}\xspace}
\newcommand{\lptriangle}{\text{LP}_{\Delta}}
\newcommand{\lpcc}{\text{LP}_{\text{CC}}}

\newcommand{\badt}{BTT\xspace}
\newcommand{\stc}{MinSTC+\xspace}
\newcommand{\stcmin}{MinSTC\xspace}
\newcommand{\cc}{CC\xspace}
\newcommand{\cd}{CD\xspace}
\newcommand{\md}{MD\xspace}
\newcommand{\Eneg}{E^-_{>0}}
\newcommand{\Epos}{E^+_{\geq 1/2}}
\newcommand{\tzero}{T_0}
\newcommand{\tl}{T_{>0}}
\newcommand{\tall}{T}
\newcommand{\opt}{\text{OPT}_{\Delta}}
\newcommand{\optmd}{\text{OPT}_{\md}}
\newcommand{\optcc}{\text{OPT}_{\cc}}
\newcommand{\optcd}{\text{OPT}_{\cd}}
\newcommand{\ptitle}[1]{\vspace{1mm}\noindent{\bf #1.}}
\newcommand{\dis}{\mathit{d}}
\newcommand{\bud}{\mathit{b}}

\usepackage[textsize=tiny]{todonotes}

\icmltitlerunning{Simple Algorithms for Bad Triangle Transversals with Applications to Correlation Clustering}

\begin{document}

\twocolumn[
  \icmltitle{Simple Algorithms for Bad Triangle Transversals with Applications to Correlation Clustering}



  \icmlsetsymbol{equal}{*}

  \begin{icmlauthorlist}
    \icmlauthor{Florian Adriaens}{yyy}
    \icmlauthor{Nikolaj Tatti}{yyy}
  \end{icmlauthorlist}

  \icmlaffiliation{yyy}{University of Helsinki, Finland}

  \icmlcorrespondingauthor{Florian Adriaens}{florian.adriaens@helsinki.fi}

  \icmlkeywords{Correlation Clustering, Bad Triangle Transversals, Approximation Algorithms}

  \vskip 0.3in
]



\printAffiliationsAndNotice{}  

\begin{abstract}
The Bad Triangle Transversal (BTT) problem asks for the smallest set of edges that need to be removed from a given signed graph, so that the resulting graph does not have a bad triangle. Here, a bad triangle is a triangle with exactly one negative edge.
Several 2-approximations for BTT are proposed in this paper.
On the hardness side, we show that BTT is NP-hard to approximate with factor better than $\frac{2137}{2136}$ on complete graphs. Our reduction also works for Correlation Clustering (CC), the Cluster Deletion problem (CD) and the Minimum Strong Triadic Closure problem (MinSTC).
Lastly, we show that the BTT and CC optima are within a factor of 3/2 in complete graphs, by describing a pivot procedure that transforms transversals into clusters.
\end{abstract}

\section{Introduction}
A signed graph is a graph where every edge has a positive or negative label.
Such graphs model a variety of different scientific phenomena.
For example, they model the ground-state energy of Ising models in physics \cite{kasteleyn1963dimer}.
Other natural applications are community detection in complex networks \cite{traag2009community}, or in opinion dynamics, when the beliefs of individuals are influenced based on negative or positive interactions \cite{shi2016evolution}. 


An important aspect in the edge formation in signed social networks is the sign of triangles, which has its roots in structural balance theory from social psychology \cite{Cartwright1956StructuralBA}. A prevalence of balanced triangles---those with an odd number of positive labels---has been observed in real-life social networks \cite{LeskovecSigned}.
Many methods for link and sign prediction try to capitalize on this, see for example the survey of Tang et~al. \yrcite{TangSurvey}.

For \emph{clustering} signed graphs, a widely popular approach is Correlation Clustering (CC) \cite{bansal2004correlation}.
In CC the goal is to partition the vertices of a graph such that the number of positive inter-cluster edges and negative intra-cluster edges is minimized. CC has widespread applications ranging from biology to document clustering and computer vision \cite{bonchi2022correlation}.

Recently, connections between CC and bad triangle covering problems have been studied \cite{pmlr-v162-veldt22a,bengali2023faster,makarychev2023single,cao2024breaking,CDbalmaseda,fischer2025fasteralgorithmconstrainedcorrelation,veldt2026simple}. Here,
a \emph{bad} triangle is a triangle with exactly one negative edge.
A \emph{transversal} or \emph{cover} is a set of edges that intersect all bad triangles.

In this paper, we provide several novel approximation algorithms and inapproximability results for the Bad Triangle Transversal problem (\badt). This problem asks to find a smallest-sized cover for a given signed graph. As we will see, the \badt problem is intricately related to the CC objective on complete signed graphs.

\subsection{Results for general graphs}
\label{sec:resgen}
Our main contribution is two novel 2-approximations for BTT.
Before providing any additional details, let us first discuss two existing baseline algorithms.

First, note that including all edges from any maximal set of edge-disjoint bad triangles immediately gives a 3-approximation.
This holds because BTT is a special case of vertex cover in $3$-uniform hypergraphs, which can be approximated by maximal matching.
We will refer to this algorithm as the \emph{standard} \emph{3-approximation}. All previous work aimed at solving BTT, or weighted variants thereof, utilize either the standard 3-approximation or a similar variant based on weighted vertex cover in $3$-uniform hypergraphs \cite{gruttemeier2020strong,pmlr-v162-veldt22a,bengali2023faster,CDbalmaseda, oettershagen2025inferring,veldt2026simple,oettershagen2024consistentstrongtriadicclosure}.

Second,
although seemingly not mentioned in the literature, a direct observation is that the 2-approximation of Krivelevich \yrcite{KRIVELEVICH1995281} from the 1990s for covering triangles in \emph{unsigned} graphs also works for 2-approximating BTT.

After solving the standard linear program (LP) relaxation of the covering problem, the algorithm of Krivelevich proceeds as follows: Iteratively remove edges with high LP-value, then re-solve the LP on the remaining graph. Repeat until a stopping criteria is satisfied. However, this algorithm is slow since it needs to solve $\bigO(m)$ cover LPs in the worst-case, where $m$ is the number of edges.

Instead, our 2-approximations for BTT directly round the optimal $\lptriangle$ values into a feasible cover. Here, $\lptriangle$ denotes the bad triangle cover LP which is defined in Section~\ref{sec:pre}.
Our algorithms are both simpler and faster than the method from \cite{KRIVELEVICH1995281}, since $\lptriangle$ needs to be solved only once and our rounding procedures are quite simple.
They are specifically designed for signed graphs, in the sense that they do not work for covering triangles in unsigned graphs.

To complete the picture in general graphs, Theorem~\ref{thm:inapproxvc} states that an approximation ratio of two is tight.
Indeed, a straightforward reduction shows that \badt is as hard to approximate as vertex cover, meaning that a better than 2-approximation is UGC-hard \cite{khot2008vertex}.

\subsection{Results for complete graphs}
BTT is particularly interesting on complete graphs, since it becomes identical to the \stc labeling problem (Section~\ref{sec:applSTC}) and it has a strong connection with the \cc problem (Section~\ref{sec:connections}).

On complete graphs Cao et~al. \yrcite{cao2024breaking} showed how to $(1+\epsilon)$-approximate $\lptriangle$ in time $\tO(\epsilon^{-7}m^{3/2})$, where $m$ is the number of positive edges.
Since our randomized rounding method (Algorithm~\ref{algo:rand}) also works on approximately optimal $\lptriangle$ solutions, combining both methods leads to Theorem~\ref{thm:random1}.
Interestingly, its running time nearly matches the  $\bigO(m^{3/2})$ time required for finding a maximal set of edge-disjoint bad triangles \cite{cao2024breaking}, which is the time required for the standard 3-approximation.

\begin{theorem}
\label{thm:random1}
For complete graphs with $n$ nodes and $m$ positive edges, there exists a randomized algorithm for \badt with a $(2+\epsilon)$ approximation guarantee in expectation, in $\tO(\epsilon^{-7}m^{3/2})$ time. This algorithm can be derandomized in an additional $\tO(n^2)$ steps.    
\end{theorem}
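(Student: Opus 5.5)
The plan is to obtain a $(1+\epsilon)$-approximate fractional solution $x$ to $\lptriangle$ with the algorithm of Cao et~al.~\yrcite{cao2024breaking}, and then round it with a single random threshold. The threshold is chosen to exploit the signed structure: every bad triangle has exactly two positive edges $a,b$ and one negative edge $c$, and the LP constraint reads $x_a+x_b+x_c\ge 1$.

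\textbf{Rounding rule (the content of Algorithm~\ref{algo:rand}).} Draw $\theta$ uniformly from $(0,1)$. The cover $F_\theta$ contains
\begin{itemize}
\item every positive edge $e$ with $x_e\ge \theta/2$, and
\item every negative edge $e$ with $x_e\ge 1-\theta$.
\end{itemize}

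\textbf{Feasibility holds for every $\theta$.} Suppose a bad triangle with positive edges $a,b$ and negative edge $c$ is left uncovered. Then $x_a,x_b<\theta/2$, so $x_a+x_b<\theta$. Also $x_c<1-\theta$. Adding these gives $x_a+x_b+x_c<1$, which contradicts the LP constraint. This argument only uses feasibility of $x$. If the solution returned by Cao et~al.\ is only approximately feasible, I would first scale it by $(1+\epsilon)$ and truncate at $1$. This costs another $(1+\epsilon)$ factor in the objective.

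\textbf{Expected cost.} For a positive edge, $\Pr[x_e\ge\theta/2]=\min(1,2x_e)\le 2x_e$. For a negative edge, $\Pr[x_e\ge 1-\theta]=x_e\le 2x_e$. By linearity,
\[
\Exp|F_\theta|\le 2\sum_e x_e\le 2(1+\epsilon)^2\opt .
\]
Rescaling $\epsilon$ by a constant gives the $(2+\epsilon)$ guarantee, because $\opt$ lower-bounds the BTT optimum.

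\textbf{Running time of the rounding.} The LP step costs $\tO(\epsilon^{-7}m^{3/2})$. The randomized rounding must not touch all $\Theta(n^2)$ negative edges, or it would exceed this budget. The key observation is that a negative edge with $x_e=0$ is never selected, since $1-\theta>0$. So the rounding only needs to scan the edges on which the LP solution is supported, and this support has size at most the running time of the LP solver. This is the step I expect to be the main obstacle: it requires checking how Cao et~al.\ represent their solution, namely that the support on negative edges is explicit and of size $\tO(\epsilon^{-7}m^{3/2})$. If instead the representation is implicit, for example via positive-edge values only, the plan is to adapt the rounding to query that implicit representation.

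\textbf{Derandomization.} The set $F_\theta$ changes only when $\theta$ crosses a breakpoint $2x_e$ (for positive $e$) or $1-x_e$ (for negative $e$). Hence $|F_\theta|$ is a step function with at most $O(n^2)$ breakpoints. The procedure is:
\begin{enumerate}
\item Sort all breakpoints.
\item Sweep $\theta$ over them, updating $|F_\theta|$ incrementally by $\pm1$ at each breakpoint.
\item Output the best $\theta$.
\end{enumerate}
Every $F_\theta$ is feasible, and the minimum over $\theta$ is at most the expectation, so the deterministic output is also a $(2+\epsilon)$-approximation. Sorting and sweeping take $\tO(n^2)$ additional time, as claimed in Theorem~\ref{thm:random1}.
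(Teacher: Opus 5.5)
Your proposal is correct and follows the paper's route: you feed Cao et al.'s $(1+\epsilon)$-approximate $\lptriangle$ solution into the threshold rounding of Algorithm~\ref{algo:rand}, with the same feasibility and per-edge probability computations as Lemma~\ref{lem:randalgo} and the paper's remark that the rounding tolerates near-optimal feasible inputs, and you derandomize by sorting and sweeping the $O(n^2)$ thresholds as in the paper's $O(|E|\log n)$ remark. Your extra care goes beyond the paper, which takes both of the following for granted: handling a possibly only approximately feasible LP output, and scanning only the support of $x$ so that the rounding step itself stays within $\tO(\epsilon^{-7}m^{3/2})$ time.
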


Next, we present our main hardness result, which uses a gap-preserving reduction from a 2SAT minimization problem.

\begin{theorem}
\label{thm:hardness2}
For complete graphs, \badt is NP-hard to approximate with factor better than $\frac{2137}{2136}$.    
\end{theorem}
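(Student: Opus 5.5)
We reduce from a 2SAT minimization problem via a gap-preserving reduction. The source problem is MIN-2SAT-DELETION restricted to bounded occurrence, or more precisely MAX-2SAT on instances where every variable occurs a bounded number of times. The known inapproximability constants (e.g.\ Berman--Karpinski style bounds for bounded-occurrence MAX-2SAT, or the $(\text{MAX-E2SAT}\ 21/22)$-type gaps) give two thresholds $c<s$. For these instances it is NP-hard to decide whether at most $c\cdot |C|$ clauses must be falsified, or at least $s\cdot|C|$ must be. The specific constant $\frac{2137}{2136}$ will come out of composing that gap with the linear blow-up of the gadgets, so I track all additive constants carefully.

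\textbf{Construction.} For every variable $x$ I build a variable gadget in a complete signed graph. I take a small positive ``core'' structure whose optimal bad-triangle transversals come in exactly two cheap forms, encoding $x=\text{true}$ and $x=\text{false}$, each of cost $\alpha$. Any other transversal of the gadget costs at least $\alpha+1$. A natural choice is an even cycle-like arrangement of positive edges among $2k$ vertices, with all other pairs negative. Each bad triangle is then a path $u$--$v$--$w$ with $uw$ negative, and the two optimal covers are the two perfect-matching-like alternating edge sets. The number of copies $k$ is proportional to the number of occurrences of $x$. For each clause $(\ell_1\vee\ell_2)$ I add a constant-size clause gadget attached to designated ``literal ports'' of the two variable gadgets. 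All pairs across different gadgets are negative, so that the only bad triangles are those inside gadgets or those using exactly the intended connecting positive edges. The clause gadget costs $\beta$ extra edges if at least one literal's port is ``released'' by the chosen alternating cover, and $\beta+1$ otherwise.

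\textbf{Analysis.} First, completeness: from an assignment falsifying $f$ clauses I get a transversal of size $\sum_x \alpha_x + \beta|C| + f$. Second, soundness, which I expect to be the main obstacle. I must show that any transversal can be normalized, without increasing its cost, into one that is consistent on every variable gadget. I would do this by a local exchange argument. If a gadget's cover is not one of the two canonical forms, it pays at least one extra edge. Replacing it by the better canonical form can break at most a bounded number of clause gadgets, and that is where bounded occurrence plus long gadgets (so that an inconsistent gadget pays proportionally to its damage) is needed. It is easiest to argue this using an LP/charging argument along the cycle: each ``switch point'' in the alternating pattern costs one extra edge.

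Finally I check that the total is $\Theta(|C|)$, i.e.\ $\sum\alpha_x+\beta|C|\le \gamma |C|$. The ratio of optima is then at least $\frac{\gamma+s}{\gamma+c}$, and plugging in the source constants and gadget sizes yields $\frac{2137}{2136}$. Since all pairs are assigned a sign, the graph is complete. As noted in the introduction, the same gadgets should also transfer to \cc, \cd and \stcmin.
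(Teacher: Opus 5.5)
There is a genuine gap. The statement is about the specific factor $\frac{2137}{2136}$, and your proposal never produces it. The final step (``plugging in the source constants and gadget sizes yields $\frac{2137}{2136}$'') is asserted, not derived. You leave the source problem unfixed: you mention Berman--Karpinski-type bounded-occurrence MAX-2SAT and H\aa stad-type MAX-E2SAT gaps interchangeably. The gadget sizes are also unspecified, so nothing in your outline determines the number.

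The constant comes from two precise ingredients:
\begin{enumerate}
\item The Chleb\'{\i}k--Chleb\'{\i}kov\'a hardness for Min 2CNF Deletion on formulas with exactly $4$ occurrences per variable and each variable negated exactly once ($n$ variables, $2n$ clauses). On these it is NP-hard to separate $\mathrm{OPT}_{\mathrm{MD}}<(2\delta+\epsilon)n$ from $\mathrm{OPT}_{\mathrm{MD}}>(3\delta-\epsilon)n$ with $\delta=\frac{1}{194}$.
\item A reduction satisfying the exact identity $\mathrm{OPT}_{\Delta}(G)=11n+\mathrm{OPT}_{\mathrm{MD}}$: $9$ edges per variable gadget, plus $1$ or $2$ edges per clause.
\end{enumerate}
Only then does $\frac{11+3/194}{11+2/194}=\frac{2137}{2136}$ come out.

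The gadget you do specify also does not work with the natural attachment. On a plain even cycle of positive edges (all other pairs negative), each of the two alternating covers contains exactly one of the two cycle edges at every vertex. Suppose a clause node $c$ is joined positively to a port vertex $v$ with cycle neighbours $u,u'$. The bad triangles $cvu$ and $cvu'$ are covered without $cv$ only if both $vu$ and $vu'$ are chosen, which no canonical cover does. So no port is ever ``released'', and the clause cost does not depend on the assignment.

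The paper's hexagram (a $6$-cycle with a tooth on every edge) fixes exactly this. Each crown's only two gadget edges lie in its tooth, so taking the three even (odd) teeth, $9$ edges, releases all even (odd) crowns. Positive literals attach to even crowns and negated literals to odd crowns.

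The normalization step you flag as the main obstacle is only sketched, and for general bounded occurrence it is not automatic. An inconsistent cover paying one extra edge may release ports of both polarities for several clauses. The paper avoids this as follows:
\begin{enumerate}
\item Each negative edge lies in at most one bad triangle, so one may assume the cover uses only positive edges.
\item Each variable is negated exactly once. If the clause edge at the unique odd crown $x_{\mathrm{odd}}$ is not in the cover, then both hexagram edges at $x_{\mathrm{odd}}$ are.
\item Hence the cover restricted to that hexagram is either the odd-teeth cover or has at least $10$ edges. In the latter case it can be swapped for the $9$ even-teeth edges plus that one clause edge, at no extra cost.
\end{enumerate}
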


An interesting observation is that this reduction also gives the same inapproximability result for three related problems: CC, the Minimum Strong Triadic Closure (MinSTC) problem \cite{SintosSTC}, and the Cluster Deletion (CD) problem \cite{SHAMIR2004173}.

Theorem~\ref{thm:hardness2} leaves open the possibility of a constant-factor approximation with factor smaller than 2 on complete graphs. Such method cannot be based on $\lptriangle$ since its integrality gap is at least 2, even when restricted to complete graphs (Lemma~\ref{lem:intgap}). Yet, we believe it exists, seeing that similar improvements were recently discovered for CC \cite{cohen2022correlation}.

\begin{remark}
\badt on complete graphs has two other existing approximations, aside from the baselines discussed in Section~\ref{sec:resgen}.
First, the well-known pivot algorithm from Ailon et~al. \yrcite{ailon2008aggregating} is a 3-approximation. This follows from its proof \cite{ailon2008aggregating}, which relates the mistakes made by pivot to the dual program of $\lptriangle$. Secondly, Cao et~al. \yrcite{cao2024breaking} describe an algorithm that gives a 2.4 approximation.
Although both methods were originally designed for approximating CC, they generate a feasible cover while charging the mistakes to $\lptriangle$. Therefore, they also approximate \badt. 
\end{remark}

\subsubsection{Connections to \stc}
\label{sec:applSTC}
Sintos \& Tsaparas \yrcite{SintosSTC} introduced the \stc problem as one of their edge labeling problems related to the strong triadic closure principle, with the goal of inferring strength of ties in social networks \cite{SintosSTC,adriaens2020relaxing,matakos2022strengthening,oettershagen2025inferring}.
Prior work showed that \stc is identical to \badt on complete graphs \cite{gruttemeier2020strong,pmlr-v162-veldt22a}.
Thus, our results directly apply to \stc.

Algorithm~\ref{algo:rand} also 2-approximates \emph{weighted} variants of \stc. Several of these weighted variants have been studied in the literature recently, such as the LambdaSTC problem \cite{bengali2023faster} and the \stc problem in temporal graphs \cite{oettershagen2025inferring}.
These works use either the standard 3-approximation or a 3-approximation for weighted vertex cover in $3$-uniform hypergraphs.
Theorem~\ref{thm:random1} improves this to a near $2$-approximation in comparable running time as the standard 3-approximation.

\subsubsection{Connections to \cc}
\label{sec:connections}
In its most simple form, \cc asks for a partition of the vertices (a \emph{clustering}) of an unsigned and unweighted graph into clusters with the smallest number of disagreements, also called mistakes \cite{bansal2004correlation}.
Every inter-cluster edge and every intra-cluster non-edge counts as a disagreement.
This problem, also known as Cluster Editing, can be equivalently formulated on complete \emph{signed} graphs by labeling every edge as positive, while introducing a negative edge for every non-edge.

\cc has an equivalent formulation in terms of forbidden subgraphs, and therefore as an edge deletion problem. Namely, remove the minimum number of edges such that the remaining graph has no bad \emph{cycles} (a cycle with exactly one negative edge) \cite{davis1967clustering,charikar2005clustering, demaine2006correlation}.\footnote{This formulation also holds for general graphs.} Indeed, every bad-cycle-free graph is \emph{clusterable} (i.e., there is some partition with zero disagreements) and such partition is obtained by the connected components of the graph induced by the positive edges.

The importance of bad triangles for \cc is well-recognized, as every clustering makes at least one mistake per such triangle. This dates back to the 1960s, when Davis \yrcite{davis1967clustering} proved that a complete graph is clusterable iff it does not have a bad triangle.
The first constant-factor approximation for \cc was due to a method based on packing edge-disjoint bad triangles \cite{bansal2004correlation}.
Later, Ailon et~al. \yrcite{ailon2008aggregating} used a fractional bad triangle packing argument to prove that their famous pivot algorithm yields a 3-approximation. 

\ptitle{Relating the \cc and \badt optima}
There is a relationship between $\opt$, the \badt optimum on a complete graph $G$, and the \cc optimum $\optcc$ of $G$.
The inequality $\opt \leq \optcc$ is immediate, since the disagreements of any feasible clustering cover all bad triangles.
Veldt \yrcite{pmlr-v162-veldt22a} showed that $\optcc \leq 2\opt$.
He proved this by transforming a feasible bad triangle cover into a valid clustering using pivot on an auxiliary graph.
The auxiliary graph is constructed by \emph{flipping} the signs of all edges in the cover.
Mistakes are then attributed to the pivot mistakes made on the auxiliary graph, which can be charged to the number of sign flips.
By using a pivot procedure with more refined inclusion probabilities, we improve this bound to $\frac{3}{2}$.
\begin{theorem}
\label{thm:optcc}
For complete graphs, it holds that $\optcc \leq \frac{3}{2}\opt$.
\end{theorem}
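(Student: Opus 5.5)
The plan is to take an optimal bad triangle transversal $F$ with $|F| = \opt$ and build from it a random clustering whose expected number of disagreements in $G$ is at most $\frac{3}{2}|F|$. Some clustering then attains at most the expectation, which gives $\optcc \leq \frac{3}{2}\opt$. Following Veldt's idea, I would run a pivot procedure that trusts the signs of edges outside $F$ and treats edges of $F$ as uncertain. Concretely: pick a uniformly random pivot $p$ among the remaining vertices. Every remaining $u$ with $pu \in E^+\setminus F$ joins $p$'s cluster with probability $1$. Every remaining $u$ with $pu \in F$ (of either sign) joins independently with probability $\frac{1}{2}$. Every $u$ with $pu \in E^-\setminus F$ does not join. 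Then remove the cluster and recurse. The probability $\frac12$ on cover edges is the "refined inclusion probability". I would keep it as a parameter $\beta$ and fix $\beta=\frac12$ at the end, as the value that balances the worst triangle types.

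For the analysis I would use the standard triangle-charging framework of Ailon et al., as extended to non-binary inclusion probabilities by Chawla et al. For a triangle $t=\{p,u,v\}$, let $A_t$ be the event that all three vertices are still unclustered when one of them is first chosen as pivot. Conditioned on $A_t$, each vertex is the pivot with probability $\frac13$. A disagreement on the edge $uv$ of $G$ is charged to the triangle formed by $uv$ and the pivot that first separates or commits $u,v$. This gives
\[
\Exp[\text{cost}] = \sum_t \Pr[A_t]\, c_t,
\]
where $c_t$ is the expected number of mistakes created on the edge opposite the pivot, averaged over the three pivot choices. On the other side, each edge $e \in F$ is "decided" exactly once, in the triangle whose third vertex is the pivot. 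So
\[
|F| = \sum_t \Pr[A_t]\, b_t,
\]
where $b_t$ is the expected number of $F$-edges of $t$ decided in $t$ (pivot opposite). Both quantities use the same $\Pr[A_t]$. Hence it suffices to prove the local inequality $c_t \leq \frac{3}{2} b_t$ for every triangle $t$. For the budget side I would use the form
\[
b_t = \tfrac13 \sum_{e\in t\cap F} \Pr[\text{$e$'s endpoints both leave the pivot's cluster together or apart}],
\]
adjusted so that edges of $F$ are fully paid the first time they are touched.

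The remaining work is a finite case analysis over triangle types, indexed by the signs of the three edges and which of them lie in $F$. The cover property does the essential work. A triangle with no $F$-edge is not bad, so it is of type $+++$, $+--$ or $---$. In each of these the deterministic rule (include on unflagged $+$, exclude on unflagged $-$) never errs on the opposite edge, so $c_t=0$. For triangles with one, two or three $F$-edges, I would compute $c_t$ as a polynomial in $\beta$ and compare it against $b_t$.

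I expect the main obstacle to be this accounting. Making the charging exact requires choosing the right notion of when an $F$-edge's budget is consumed, since the edge can be "touched" without being resolved when only one endpoint is pivot-adjacent. The hard cases should be triangles with exactly one $F$-edge adjacent to the pivot. An example is $p$ with $pu\in F$ positive and $pv\in E^+\setminus F$: here $u$ joins only with probability $\frac12$, and this can create a mistake on $uv$ even though $uv\notin F$. These cases should force the factor $\frac32$. If the naive charge fails for some type, I would first try letting $\beta$ depend on the sign of the $F$-edge (for example, $\frac12$ for positive and a different value for negative cover edges). I would then re-check the worst cases, since that is the natural extra freedom in a "more refined" pivot.
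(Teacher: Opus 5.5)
Your framework is the one the paper uses: pivot with randomized inclusion across cover edges, and Chawla-style charging of each triangle's disagreements against one unit of budget per edge of $F$. However, the choice $\beta=\frac12$ makes the key local inequality false, so the proof does not go through.

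Take a bad triangle $uvw$ with $uv\in E^+\cap F$, $uw\in E^+\setminus F$ and $vw\in E^-\setminus F$.
\begin{itemize}
\item With pivot $w$, $u$ joins and $v$ does not, so $uv$ is cut with probability $1$.
\item With pivot $v$, the positive edge $uw$ is cut with probability $\beta$.
\item With pivot $u$, the negative edge $vw$ lands inside the cluster with probability $\beta$.
\end{itemize}
The disagreement sum is $1+2\beta$. The budget is $1$, since only $uv\in F$ carries budget and it is removed with certainty when $w$ pivots. At $\beta=\frac12$ the ratio is $2$, no better than Veldt's bound. The mirror case ($uv\in E^-\cap F$, $uw,vw\in E^+\setminus F$) gives $1+2(1-\beta)=2$ as well.

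This is not slack in the analysis. Take an all-positive clique on $\{u,w_1,\dots,w_k\}$ plus a vertex $v$ with $uv\in E^+$ and $vw_i\in E^-$. The set $F=\{uv\}$ is an optimal cover. Your procedure has expected cost $\frac{k+2\beta k+2(1-\beta)}{k+2}$, which for $\beta=\frac12$ equals $\frac{2k+1}{k+2}\to 2$. Your fallback of keeping $\frac12$ on positive cover edges fails on this same instance.

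The missing idea is that the inclusion probabilities must depend on the sign of the cover edge, and the single-cover-edge triangles determine them. Write $\beta_+,\beta_-$ for positive and negative cover edges. For the local inequality to hold:
\begin{itemize}
\item The bad triangle above forces $1+2\beta_+\le\frac32$, so $\beta_+\le\frac14$.
\item An all-positive triangle with one cover edge has ratio $2(1-\beta_+)$, forcing $\beta_+\ge\frac14$.
\item The mirror bad triangle forces $\beta_-\ge\frac34$.
\item A triangle with one positive and two negative edges, whose cover edge is negative, has ratio $2\beta_-$, forcing $\beta_-\le\frac34$.
\end{itemize}
So $\beta_+=\frac14$, $\beta_-=\frac34$ is the only option, which is exactly Algorithm~\ref{algo:piv}. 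The two- and three-cover-edge cases must then still be checked, as in Tables~\ref{tab:distable}--\ref{tab:ratiotable}.

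Your budget accounting also needs repair. For $uv\in F$, the per-triangle budget should be $b(u,v\mid w)=p_{uw}+p_{vw}-p_{uw}p_{vw}$. This is the probability that at least one endpoint joins $P_w$, i.e.\ that the edge is removed. The identity summing to $|F|$ also needs the pair terms where $u$ or $v$ is itself the pivot; these are handled by $d(u,v\mid v)\le b(u,v\mid v)$. Your formula ``$|F|=\sum_t\Pr[A_t]\,b_t$'' omits these terms. The event you describe for $b_t$ (both endpoints leaving together or apart) is also not the removal event.
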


The algorithmic question of transforming a feasible bad triangle cover $F$ into a clustering with disagreements at most $\alpha|F|$ for smallest possible $\alpha \geq 1$ is interesting.
In Section~\ref{sec:covtoclust} we prove Theorem~\ref{thm:optcc} by giving such a poly-time transformation with $\alpha=\frac{3}{2}$.
If one could do this in poly-time for even smaller values, say $\alpha = 1+\epsilon$, then our 2-approximations for \badt imply a $2+2\epsilon$ approximation for CC.
This could be a potential strategy to close the gap between the upper bound given by the 2.06 approximation algorithm of \cite{chawla2015near}, and the lower bound of 2 on the integrality gap of the standard linear program with triangle inequality constraints ($\lpcc$) \cite{charikar2005clustering,demaine2006correlation}.\footnote{Recent research has shown that CC admits approximations with guarantee smaller than two \cite{cohen2022correlation}, but closing this gap is still interesting.}


An interesting open question is whether $\optcc = \opt$, which we could not prove nor refute by finding numerical counterexamples.

\ptitle{Open Question}
\emph{For complete graphs, does it hold that $\optcc = \opt$? If not, can one further improve the bound from Theorem~\ref{thm:optcc}?}

\begin{figure}[t!]
\centering
\begin{tikzpicture}
\def\ri{0.86}
\def\ro{1.5}
\def\xo{4cm}
\def\yo{-0.5cm}


    \node[label={[xshift=0cm, yshift=-0.15cm]$a$}] (A) at (30:\ro) {};
    \node[label={[xshift=0cm, yshift=-0.15cm]$b$}] (B) at (60:\ri) {};
    \node[label={[xshift=0cm, yshift=-0.15cm]$c$}] (C) at (120:\ri) {};
    \node[label={[xshift=-0.05cm, yshift=-0.6cm]$d$}] (D) at (240:\ri) {};
    \node[label={[xshift=-0.05cm, yshift=-0.6cm]$e$}] (E) at (300:\ri) {};
    \node[label={[xshift=-0.05cm, yshift=-0.67cm]$f$}] (F) at (330:\ro) {};

    \draw[dashed,thick] (B.center) -- (C.center) -- (D.center) -- (E.center) -- cycle;
    
    \foreach \n in {A,B,C,D,E,F}
        \node at (\n)[circle,fill,inner sep=1pt]{};
    
\end{tikzpicture}
\caption{Removing edges from an optimal \badt solution can lead to a bad cycle. Dashed edges are negative edges, all other pairs are connected by a positive edge (not drawn).}
\label{fig:difference}
\end{figure}
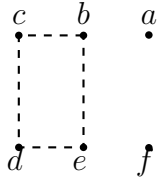

Figure~\ref{fig:difference} shows a graph, which after the removal of  the edges from an optimal \badt solution, results in a graph that still has a bad cycle. 
The graph from Figure~\ref{fig:difference} satisfies $\optcc = \opt = 4$.
The edges $\{ac,ae,bf,df\}$ are a valid cover.
After removing them, there are still bad cycles such as $a,b,c,f$. However, there is an alternative optimal cover, namely all the negative edges, that deletes all the bad cycles.
Thus, a simple feasibility argument cannot be used to prove that $\optcc = \opt$.
Note that the question only makes sense for complete graphs. There exist non-complete graphs with many bad cycles, without having a bad triangle.

\subsubsection{Recent \cc algorithms based on \badt}
Better than 3-approximations for CC were obtained by solving $\lpcc$ \cite{ailon2008aggregating, chawla2015near}.
One practical limitation is that $\lpcc$ has $\Theta(n^3)$ constraints and is thus very expensive to solve.
To address this scalability issue, several authors \cite{pmlr-v162-veldt22a,bengali2023faster, makarychev2023single,cao2024breaking,CDbalmaseda,fischer2025fasteralgorithmconstrainedcorrelation,veldt2026simple} have investigated how $\lptriangle$ can be used to generate more scalable approximations for CC.
This approach works well because (a) $\lptriangle$ is a relaxation of $\lpcc$, and (b) approximate solutions to $\lptriangle$ can be found quickly and combinatorially by specialized solvers for covering LPs \cite{cao2024breaking,fischer2025fasteralgorithmconstrainedcorrelation,veldt2026simple}. 
Specifically, $\lptriangle$ can be $(1+\epsilon)$-approximated in time bottle-necked by finding a maximal set of edge-disjoint bad triangles \citep[Corollary 1.3]{cao2024breaking}.
By rounding an approximate $\lptriangle$ solution into a feasible clustering, Cao et~al. \yrcite{cao2024breaking} showed a poly-logarithmic depth parallel algorithm for \cc with approximation ratio $2.4+\epsilon$.

Even more simple and scalable \cc approximations (at the expense of worse approximation guarantees) are obtained by directly finding feasible bad triangle covers, without solving $\lptriangle$ \cite{pmlr-v162-veldt22a,bengali2023faster,CDbalmaseda}.
The typical way to obtain such covers is to simply pick all the edges from a maximal set of edge-disjoint bad triangles, as mentioned before.
These are then subsequently transformed into valid clusters, for example by running pivot on some auxiliary graph.
This is the MatchFlipPivot routine devised by Veldt \yrcite{pmlr-v162-veldt22a}, and yields a 6-approximation for CC. Using Theorem~\ref{thm:random1} and Theorem~\ref{thm:optcc}, we improve this to a $(3+\epsilon)$-approximation for \cc with similar runtime as MatchFlipPivot.

\section{Other Related Work}
Kortsarz et~al. \yrcite{kortsarz2010approximating} showed that the problem of covering triangles in unsigned graphs can be used to approximate vertex cover on triangle-free graphs, which is as hard to approximate as vertex cover on general graphs. They also provided $(k-1)$-approximations for covering cycles of length $k$, thereby extending the results by Krivelevich \yrcite{KRIVELEVICH1995281} from $k=3$ to any $k>3$.

Chalermsook et~al. \yrcite{chalermsook2020multi} proved a strengthened extension of Krivelevich \yrcite{KRIVELEVICH1995281} to multiset transversals, where a triangle should intersect at least a fixed number of edges from the multiset.

We also mention the work of Chapuy et~al. \yrcite{chapuy2014packing}, which shows a method returning a triangle transversal of size at most $2x - \sqrt{x/6}+1$, where $x$ is the optimum of the standard LP relaxation of the unsigned triangle transversal problem.
This gives a $(2-o(1))$-approximation for covering triangles in unsigned graphs.
They provide similar results for unsigned multigraphs.

CC on complete graphs has been subject to a great number of algorithmic improvements recently. In their breakthrough paper, Cohen-Addad et~al. \yrcite{cohen2022correlation} went beyond a 2-approximation.
Subsequently, a series of papers improved upon this, culminating in a $1.485+\epsilon$-approximation \cite{cohen2023handling,cao2024understanding, cao2025solving} .
For general graphs, CC is equivalent to Minimum Multicut and admits an $\bigO(\log n)$ approximation \cite{charikar2005clustering,demaine2006correlation}.

Adriaens \& Apers \yrcite{wwwtestingclust} studied graph property testing in signed graphs, resulting in algorithms for testing both bad-triangle-freeness and clusterability.

In Section~\ref{sec:connections} we have essentially conjectured that the optima of \stc and \cc are equal on complete graphs, as we could not find a counterexample.
Just like there is a connection between \stc and \cc, there is also a connection between \stcmin and \cd, see for example \cite{gruttemeier2020strong,pmlr-v162-veldt22a}.
However, for these problems there are known examples where the \stcmin and \cd optima differ by a factor of $8/7$ \cite{gruttemeier2020strong}.

For \cd, there exists an elegant LP-based 2-approximation \cite{veldt2018correlation}.
Balmaseda et~al. \yrcite{CDbalmaseda} gave faster combinatorial approximations which scale to large graphs.

\section{Approximation Algorithms}
Here we present our approximation algorithms for \badt.
All algorithms work for general graphs.
The proofs of their approximation guarantees are given in Appendix.

\subsection{Preliminaries}
\label{sec:pre}
For a signed graph $G=(V, E^+,E^-)$ let $E = E^+ \cup E^-$.
Let $\tall$ be the set of bad triangles.
If an edge $e \in E$ is part of a bad triangle $t \in \tall$, then we write $e \in t$.
The natural LP relaxation of \badt is given by
\begin{align}
\label{bdc}
\text{min} \quad \sum_{e \in E}x_{e}, &\tag{LP\(_\Delta\)}\\
\quad \text{s.t.}\quad \sum_{e \in t} x_{e} & \geq 1, &  \forall t & \in \tall, \notag\\
x_{e} & \geq 0, &  \forall e &\in E. \notag
\end{align}

The dual of $\lptriangle$ is the following packing problem
\begin{align}
\label{packt}
\text{max} \quad \sum_{t \in \tall}y_t & \tag{PackingLP}\\
\text{s.t.}\quad \sum_{t : e \in t} y_t & \leq 1,  & \forall e & \in E, \notag\\
y_t & \geq 0, & \forall t & \in \tall.\notag
\end{align}

\subsection{\badt algorithms based on $\lptriangle$}
\label{sec:folk}
Algorithm~\ref{algo:kriv} is the folklore 2-approximation discussed in Section~\ref{sec:resgen}, but rephrased to the \badt setting.

Algorithm~\ref{algo:det} is based on complementary slackness properties, and is a special case of Algorithm~\ref{algo:rand} for $r=1$.

Algorithm~\ref{algo:rand} is inspired by a randomized algorithm proposed by Lov{\'a}sz \yrcite{lovasz1975minimax} for finding a minimum vertex cover in $k$-partite $k$-uniform hypergraphs.
To see the connection, construct a 3-uniform \emph{bipartite} hypergraph $\mathcal{H}$ as follows.
Every node in $\mathcal{H}$ corresponds to an edge in $G$.
There is an edge $\{e_1,e_2,e_3\}$ in $\mathcal{H}$ if $e_1,e_2$ and $e_3$ form a bad triangle in $G$.
$\mathcal{H}$ is bipartite, in the sense that every edge in $\mathcal{H}$ contains exactly one node corresponding to a negative edge from $G$. \badt is equivalent to finding a minimum vertex cover in $\mathcal{H}$.
Since $\mathcal{H}$ is not 3-partite we cannot directly use Lov{\'a}sz' algorithm. Yet, with some modifications to the algorithm, we get a 2-approximation for \badt.

\begin{algorithm}[tb]
  \caption{\badt 2-approximation by Krivelevich \yrcite{KRIVELEVICH1995281}}
  \label{algo:kriv}
  \begin{algorithmic}
    \STATE {\bfseries Input:} Signed graph $G=(V,E)$ and optimal $\{x_e\}$ values from $\lptriangle$.
    \STATE $C \leftarrow \emptyset$.
    \WHILE{$x_e=0$ for some $e \in E$}
    \STATE $C \leftarrow C \cup \{e \in E: x_e \geq 1/2\}$.
    \STATE Remove $e \in E$ with $x_e=0$ or $x_e \geq 1/2$ from $G$.
    \STATE Solve $\lptriangle$ on new $G$ to obtain new $\{x_e\}$ values.
    \ENDWHILE
    \STATE $V_1, V_2 \leftarrow $ $2$-approximate max cut, see~\citep{vazirani2001approximation}.
    \STATE $C \leftarrow C \cup \{(u, v) \in E : u, v \in V_1\}$.
    \STATE $C \leftarrow C \cup \{(u, v) \in E : u, v \in V_2\}$.
    \STATE {\bfseries Return:} $C$.
  \end{algorithmic}
\end{algorithm}

\begin{algorithm}[t]
    \caption{A simple deterministic 2-approximation for \badt.}
    \label{algo:det}
    \begin{algorithmic}[1]
        \STATE {\bfseries Input:} Signed graph $G=(V,E)$ and optimal $\{x_e\}$ values from $\lptriangle$.
        \STATE $\Eneg \leftarrow \{e \in E^-:x_e>0\}$.
        \STATE $\Epos \leftarrow \{e \in E^+:x_e \geq 1/2\}$.
        \STATE {\bfseries Return:} $\Eneg \cup \Epos$.
    \end{algorithmic}
\end{algorithm}

\begin{algorithm}[t]
    \caption{A simple randomized 2-approximation for \badt.}
    \label{algo:rand}
    \begin{algorithmic}[1] 
        \STATE {\bfseries Input:} Signed graph $G=(V,E)$ and optimal $\{x_e\}$ values from $\lptriangle$.
        \STATE $C \leftarrow \emptyset$.
        \STATE Draw $r \in [0,1]$ uniformly at random.
        \STATE For every $e \in E^+$, add $e$ to $C$ if $x_e \geq r/2$.
        \STATE For every $e \in E^-$, add $e$ to $C$ if $x_e > 1-r$.
        \STATE {\bfseries Return:} $C$.
    \end{algorithmic}
\end{algorithm}

\begin{lemma}
\label{lem:kriv}
Algorithm~\ref{algo:kriv} is a 2-approximation for \badt.    
\end{lemma}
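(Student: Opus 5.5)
We will prove the lemma by induction over the iterations of the while loop. We charge each chosen edge to the LP optimum of the graph current at that moment, and show that the final max-cut step costs at most twice the LP value of the last graph. Let $G_0=G, G_1,\dots,G_k$ be the successive graphs, with optimal LP values $\mathrm{LP}(G_i)$. At iteration $i$ let $H_i=\{e: x_e\ge 1/2\}$ and $Z_i=\{e:x_e=0\}$. The central claim is
\[
\mathrm{LP}(G_{i+1}) \le \mathrm{LP}(G_i) - \sum_{e\in H_i} x_e \le \mathrm{LP}(G_i)-|H_i|/2 .
\]
To justify it, restrict the current optimal $x$ to the edges of $G_{i+1}$. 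Every bad triangle of $G_{i+1}$ is a bad triangle of $G_i$ that avoids $H_i\cup Z_i$, so the restriction is still feasible. Its value is $\mathrm{LP}(G_i)-\sum_{e\in H_i}x_e$, because the removed zero edges contribute nothing. Telescoping, $\sum_i |H_i| + 2\,\mathrm{LP}(G_k)\le 2\,\mathrm{LP}(G)\le 2\opt$. It then suffices to show that the final max-cut step adds at most $2\,\mathrm{LP}(G_k)$ edges. Deleting the edges of $C$ together with the zero-valued edges must also leave no bad triangle of $G$ uncovered. A bad triangle that survived to $G_k$ is covered by the max-cut step, as shown below. Every other bad triangle lost an edge at some iteration $i$; if that edge lies in $H_i$ we are done. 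If it lies only in $Z_i$, LP feasibility in $G_i$ means the other two edges have total value $\ge 1$, so one of them has value $\ge 1/2$ and lies in $H_i$, and is therefore in $C$.

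For the last graph $G_k$ every LP value satisfies $0<x_e<1/2$. In any optimal solution every edge with $x_e>0$ lies in a tight bad triangle (otherwise lower $x_e$); in fact by complementary slackness there is an optimal dual $y$ with $\sum_{t\ni e}y_t=1$ for every edge $e$, because $x_e>0$ everywhere. Since no edge has value $\ge 1/2$, each bad triangle of $G_k$ contains at least three edges, all present. Summing the packing constraints gives $3\sum_t y_t = |E(G_k)|$, so $\mathrm{LP}(G_k)=\sum_t y_t=|E(G_k)|/3$ by strong duality. Next take a cut $V_1,V_2$ whose cut edges number at least $|E(G_k)|/2$; the standard local-search or greedy $2$-approximate max cut guarantees at least half of all edges, not merely half of the optimum, and we will use this form. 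The edges inside the two sides number at most $|E(G_k)|/2 = \tfrac32\mathrm{LP}(G_k)\le 2\,\mathrm{LP}(G_k)$. Every triangle has some side containing two of its vertices, so it has an edge inside $V_1$ or $V_2$, and the max-cut step therefore covers every bad triangle remaining in $G_k$.

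Adding up, $|C| \le \sum_i|H_i| + \tfrac32\mathrm{LP}(G_k)\le 2\,\mathrm{LP}(G)\le 2\opt$. The main obstacle is the termination and degenerate edge handling. Edges of $G_k$ that lie in no bad triangle would get $x_e=0$ and hence be removed, so in $G_k$ every edge sits in a bad triangle; this is consistent with the loop condition. The loop terminates after at most $m$ iterations because each iteration with some $x_e=0$ removes at least one edge. I also need to double-check the complementary-slackness step (equality $\sum_{t\ni e}y_t=1$ for all remaining edges), which is where the bound $|E(G_k)|=3\,\mathrm{LP}(G_k)$ comes from.
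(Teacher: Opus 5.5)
Your proof is correct and follows the same route the paper sketches (Krivelevich's argument): the telescoping bound $|H_i|\le 2(\mathrm{LP}(G_i)-\mathrm{LP}(G_{i+1}))$, complementary slackness on the final graph giving $\mathrm{LP}(G_k)=|E(G_k)|/3$, and a cut containing at least half of all edges (you are right that this, not merely a $2$-approximation of the maximum cut, is the guarantee needed). One small inaccuracy: the loop only guarantees $x_e>0$ on $G_k$, not $x_e<1/2$. Your count $3\sum_t y_t=|E(G_k)|$ never uses that upper bound, so nothing breaks.
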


\begin{lemma}
\label{lem:det2}
Algorithm~\ref{algo:det} is a 2-approximation for \badt.     
\end{lemma}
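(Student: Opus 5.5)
We need to prove two things: that the output $\Eneg \cup \Epos$ is a feasible transversal, and that its size is at most $2\sum_e x_e \le 2\opt$, where $\{x_e\}$ is an optimal solution of $\lptriangle$ and $\{y_t\}$ an optimal solution of the dual (PackingLP).

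\emph{Feasibility.} Take a bad triangle $t$ with positive edges $e_1,e_2$ and negative edge $f$. If $x_f>0$, then $f\in\Eneg$ and $t$ is covered. Otherwise $x_f=0$, so the constraint gives $x_{e_1}+x_{e_2}\ge 1$. Hence $\max(x_{e_1},x_{e_2})\ge 1/2$, so one of $e_1,e_2$ lies in $\Epos$.

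\emph{Cost.} Positive edges are handled by plain threshold rounding: $|\Epos|\le 2\sum_{e\in E^+}x_e$. Negative edges need complementary slackness, since their $x_e$ values can be arbitrarily small and cannot be charged directly. By primal complementary slackness, every $f\in\Eneg$ has $x_f>0$, hence $\sum_{t\ni f}y_t=1$. Each bad triangle contains exactly one negative edge, so the sets $\{t: f\in t\}$ for $f\in E^-$ are disjoint. Therefore
\[
|\Eneg|=\sum_{f\in\Eneg}\sum_{t\ni f}y_t\le \sum_{t\in\tall}y_t=\sum_{e}x_e,
\]
where the last equality is LP duality.

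Adding the two bounds gives $|\Eneg|+|\Epos|\le 3\sum x_e$, which is too weak. The main obstacle is to avoid double-counting the positive part.

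The fix is to charge the positive edges through the dual as well. For $e\in\Epos$ we have $x_e>0$, so $\sum_{t\ni e}y_t=1$ by complementary slackness. Each bad triangle has exactly two positive edges, which gives
\[
|\Eneg|+|\Epos| = \sum_{t}y_t\bigl(\mathbbm{1}[f_t\in\Eneg]+|\{e\in t\cap E^+: e\in\Epos\}|\bigr),
\]
where $f_t$ denotes the negative edge of $t$. The coefficient of $y_t$ is at most $3$, and this must be reduced to at most $2$.

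Now use dual complementary slackness: if $y_t>0$, then $x_{f_t}+x_{e_1}+x_{e_2}=1$. Consider a triangle whose coefficient would be $3$. Then $x_{f_t}>0$ and $x_{e_1},x_{e_2}\ge 1/2$, so the sum exceeds $1$, which forces $y_t=0$. Every triangle with $y_t>0$ therefore has coefficient at most $2$, and
\[
|\Eneg\cup\Epos|\le 2\sum_t y_t=2\sum_e x_e\le 2\opt.
\]

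This complementary-slackness observation on tight triangles is the key step. Note that it requires an exactly optimal primal–dual pair, which matches the hypothesis of Algorithm~\ref{algo:det}.
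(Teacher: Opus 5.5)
Your proof is correct and follows essentially the same route as the paper. Primal complementary slackness turns $|\Eneg|$ and $|\Epos|$ into sums of dual values $y_t$, and dual complementary slackness rules out a triangle with $y_t>0$ whose negative edge lies in $\Eneg$ and whose two positive edges both lie in $\Epos$. Your per-triangle coefficient bound of $2$ is a compact repackaging of the paper's split of $\tall$ into $\tl$ (at most one $\Epos$ edge when $y_t>0$) and $\tzero$ (at most two $\Epos$ edges).
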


\begin{lemma}
\label{lem:randalgo}
Algorithm~\ref{algo:rand} is a 2-approximation in expectation for \badt.    
\end{lemma}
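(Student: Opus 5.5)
We prove two things about Algorithm~\ref{algo:rand}: every output $C$ is a feasible cover (for every $r$), and $\Exp[|C|]\le 2\sum_e x_e$, where $\{x_e\}$ is optimal for $\lptriangle$. Since $\sum_e x_e\le\opt$, together these give the lemma.

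\emph{Feasibility.} Fix $r\in[0,1]$ and a bad triangle $t$ with positive edges $e_1,e_2$ and negative edge $f$. Suppose no edge of $t$ is selected, so $x_{e_1}<r/2$, $x_{e_2}<r/2$ and $x_f\le 1-r$. Then
\[
x_{e_1}+x_{e_2}+x_f<\tfrac r2+\tfrac r2+1-r=1,
\]
which contradicts the covering constraint of $\lptriangle$. Hence every bad triangle is hit. The two thresholds are tuned exactly so that these strict and non-strict inequalities combine to a strict violation; this is the design point of the algorithm.

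\emph{Cost of positive edges.} For $e\in E^+$ we have $\Pr[e\in C]=\Pr[r\le 2x_e]=\min(1,2x_e)\le 2x_e$. So the positive edges contribute at most $2\sum_{e\in E^+}x_e$ in expectation.

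\emph{Cost of negative edges.} For $f\in E^-$ we have $\Pr[f\in C]=\Pr[r>1-x_f]=\min(1,x_f)$. Note that $x_f\le 1$ may be assumed for an optimal solution, since capping values at $1$ keeps feasibility. This probability is at most $x_f\le 2x_f$. Summing the two contributions gives $\Exp[|C|]\le 2\sum_e x_e\le 2\,\opt$.

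\emph{Main subtlety.} The delicate point is the boundary case of the thresholds, namely the use of $\ge$ for positive edges and $>$ for negative edges. The feasibility argument must use strict inequalities correctly; I checked above that they combine to give strict violation of the constraint. The only other care needed is the remark that $x_e\le 1$ for an optimal solution (or simply using $\min(1,\cdot)$), so that the probabilities are well defined. The same argument holds for approximately optimal $\lptriangle$ solutions, which is what Theorem~\ref{thm:random1} will need. In fact negative edges are charged only once, which suggests the factor $2$ is driven entirely by the positive edges.
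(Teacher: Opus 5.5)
Your proposal is correct and follows essentially the same route as the paper's proof. It uses the same strict-inequality argument $x_{e_1}+x_{e_2}+x_f<\tfrac r2+\tfrac r2+1-r=1$ for feasibility, and the same per-edge inclusion probabilities ($\min(1,2x_e)$ for positive edges and $x_f$ for negative edges), giving $\Exp[|C|]\le\sum_{e\in E^-}x_e+2\sum_{e\in E^+}x_e\le 2\opt$; your capping remark is harmless but unnecessary, since $\min(1,x_f)\le x_f$ holds regardless.
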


Let us briefly compare the worst-case time complexity of Algorithm~\ref{algo:kriv} compared to Algorithms~\ref{algo:det} and~\ref{algo:rand}, without going in too much detail on the time required to solve LPs.
Say that $\lptriangle$ can be solved exactly in $\tO(m^{\alpha})$ arithmetic operations, where $m$ is the number of edges in $G$ and $\alpha \geq 2$ is a constant related to matrix multiplication, see \cite{cohen2021solving}. In the worst-case Algorithm~\ref{algo:kriv} removes one edge at a time, resulting in a total time complexity bound of
\[
\tO(m^{\alpha}+(m-1)^{\alpha}+\ldots+1) = \tO(m^{\alpha+1}).
\]
\begin{remark}
Algorithm~\ref{algo:rand} can easily be derandomized in $\bigO(|E| \log n)$ time.
Separately sort both the negative and positive edges according to their $x_e$ values.
Start with a solution $C$ that corresponds to $r=1$ in Algorithm~\ref{algo:rand}, containing all the non-zero negative edges and the edges from $\{e \in E^+:x_e \geq 1/2\}$.
Iterate through the sorted list of negative edges until you encounter an edge with new LP value $z$, with value different from its predecessor.
Consider a candidate solution by removing from $C$ all negative edges with LP value smaller than $z$, while adding the positive edges with value $x_e \geq (1-z)/2$ to $C$. Keep iterating through both lists and repeat a similar comparison, while maintaining track of the smallest solution found so far. After the iteration terminates, we have a solution at least as good as the output from Algorithm~\ref{algo:rand}. Indeed, each possible outcome for $r \in [0,1]$ is compared with at some point.
\end{remark}

\begin{remark}
Algorithm~\ref{algo:rand} has two attractive properties.
First, it also 2-approximates \emph{weighted} \badt, which asks to minimize $\sum_{e \in E}w_e x_{e}$. 
Secondly, it gives a $(2+2\epsilon)$-approximation if $\{x_{e}\}$ is a \emph{feasible} fractional cover with total cost $\leq (1+\epsilon)\lptriangle$.
\end{remark}

\section{Hardness of Approximation}
Several negative results concerning the approximability of \badt and related problems are discussed next.
All proofs are found in Appendix.

Let us start by the following observation regarding the achievable approximation ratio for algorithms based on our linear program. 

\begin{lemma}\label{lem:intgap}
The integrality gap of $\lptriangle$ is at least two, even in complete graphs.
\end{lemma}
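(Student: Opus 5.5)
The plan is to exhibit an explicit family of complete signed graphs $G_n$ for which the ratio $\opt(G_n)/\lptriangle(G_n)$ tends to $2$. I would use the ``star with negative leaves'' graph. Take a center vertex $c$ and leaves $v_1,\dots,v_n$. All edges $cv_i$ are positive, and every pair $v_iv_j$ is negative. This graph is complete. Its bad triangles are exactly the triangles $\{c,v_i,v_j\}$, since every triangle among leaves has three negative edges and so is not bad.

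\emph{Upper bound on the LP.} Set $x_{cv_i}=1/2$ for all $i$ and $x_e=0$ on every negative edge. Each bad triangle $\{c,v_i,v_j\}$ contains the two positive edges $cv_i,cv_j$, so it receives total value $1/2+1/2=1$. The solution is therefore feasible for $\lptriangle$, and $\lptriangle(G_n)\le n/2$. This is all the gap argument needs; computing the LP optimum exactly is unnecessary.

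\emph{Lower bound on the integral optimum.} Let $C$ be any feasible cover. Let $K$ be the set of leaves $v_i$ with $cv_i\notin C$, and put $k=|K|$. For any two distinct $v_i,v_j\in K$, the triangle $\{c,v_i,v_j\}$ is hit only through $v_iv_j$, so $C$ must contain all $\binom{k}{2}$ negative edges inside $K$. This gives
\[
|C|\ \ge\ (n-k)+\binom{k}{2}.
\]
The right-hand side is a convex function of $k$. Over integers $0\le k\le n$ it is minimized at $k=1$ or $k=2$, where it equals $n-1$. Hence $\opt(G_n)\ge n-1$; in fact equality holds, by removing all but one positive edge.

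Combining the two bounds gives $\opt(G_n)/\lptriangle(G_n)\ge 2(n-1)/n$, which tends to $2$ as $n\to\infty$. So the integrality gap is at least $2$ on complete graphs. There is no real obstacle here. The only point needing care is the integer minimization of $(n-k)+\binom{k}{2}$, which is a direct check of small $k$ together with convexity.
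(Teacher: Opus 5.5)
Your proof is correct and uses the same construction as the paper: a negative clique on $n$ nodes plus a center joined positively to every node, which gives the ratio $2(n-1)/n \to 2$. You also fill in two details the paper leaves implicit: the integral optimum $n-1$ follows from the bound $(n-k)+\binom{k}{2}$, and only a feasible fractional solution of value $n/2$ is needed, not optimality of the LP.
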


Therefore, any approximation algorithm based on $\lptriangle$ cannot achieve competitive ratio better than two, even in complete graphs. Next we show that in general graphs a ratio of two is most likely the best one. For complete graphs it leaves open the possibility of an approximation ratio smaller than two, although Lemma~\ref{lem:intgap} implies that such method cannot be based on rounding $\lptriangle$.

\begin{theorem}
\label{thm:inapproxvc}
\badt is as hard to approximate as Vertex Cover (VC), meaning that an $\alpha$-approximation for \badt can be used to $\alpha$-approximate VC.
\end{theorem}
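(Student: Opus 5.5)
We reduce Vertex Cover to \badt with an approximation-preserving construction. Given a VC instance $H=(U,F)$, build a signed graph $G$ whose vertex set is $U$ together with one new apex vertex $z$. For every edge $uv \in F$ add a negative edge $uv$ to $G$. For every vertex $u \in U$ add a positive edge $zu$. Then the bad triangles of $G$ are exactly the triangles $\{z,u,v\}$ with $uv \in F$: each such triangle has two positive edges $zu, zv$ and one negative edge $uv$. No other triangles exist, since any triangle avoiding $z$ consists only of negative edges, and any triangle through $z$ uses an edge of $F$. So bad triangles are in bijection with the edges of $H$.

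The two directions of the correspondence come next. First, every vertex cover $S \subseteq U$ of $H$ yields the transversal $\{zu : u \in S\}$ of the same size, because each bad triangle $\{z,u,v\}$ contains $zu$ or $zv$. Conversely, let $C$ be any transversal. We convert it in polynomial time into a vertex cover of size at most $|C|$. Put $u$ into $S$ whenever $zu \in C$. For each negative edge $uv \in C$, put one of its endpoints (say $u$) into $S$; in the transversal this amounts to replacing $uv$ by $zu$. Every edge $uv \in F$ is then covered: its triangle was hit by $zu$, $zv$ or $uv$, and in each case $u$ or $v$ lies in $S$. Moreover $|S|\le |C|$.

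Hence $\opt(G)=\mathrm{VC}(H)$, and an $\alpha$-approximate transversal maps to a vertex cover of size at most $\alpha\,\opt(G)=\alpha\,\mathrm{VC}(H)$. The construction is polynomial, so this gives the theorem. There is no real obstacle. The only point needing care is that the construction creates no unintended bad triangles, and this is why all original edges are made negative and all apex edges positive. Combined with \cite{khot2008vertex}, this yields UGC-hardness of $(2-\epsilon)$-approximation.
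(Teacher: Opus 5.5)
Your proof is correct and is the paper's reduction: make the original edges negative and join a new apex to every vertex by a positive edge. You then turn any transversal into a vertex cover of no larger size by trading each negative edge for an apex edge at one of its endpoints. You also spell out two points the paper leaves implicit, namely that bad triangles correspond exactly to the edges of the VC instance and that $\opt(G)=\mathrm{VC}(H)$. One wording slip: the all-negative triangles inside $U$ do exist, they are just not bad.
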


Khot \& Regev \yrcite{khot2008vertex} proved that VC is UGC-hard to approximate with factor better than 2.
As a consequence of Theorem~\ref{thm:inapproxvc}, \badt is also UGC-hard to approximate with factor better than 2.

\subsection{Complete graphs\label{sec:csg}}

Theorem~\ref{thm:inapproxvc} does not apply to complete graphs, since the reduction in its proof does not result in a complete graph.
Instead, we reduce from the Minimum 2CNF Deletion problem (\md). \md asks for a variable assignment that minimizes the number of unsatisfied clauses in a 2SAT formula.
Let $\optmd$ denote the optimum of \md on an instance which will be clear from context.

The following problem is shown to be NP-hard by Chleb{\'\i}k \& Chleb{\'\i}kov{\'a} \yrcite{chlebik2004approximation}.

\begin{theorem}[\cite{chlebik2004approximation}, Section 3.1]
\label{thm:chlebik}
Let $0<\delta \leq \frac{1}{194}$ be a constant, and let $0<\epsilon< \delta/2$ be arbitrarily small independently of $\delta$.
Given a 2CNF formula on $n$ variables, with no repeated clauses, exactly 4 occurrences per variable, and where each variable appears exactly once as negated, it is NP-hard to distinguish between
\begin{enumerate}[label=(\roman*)]
    \item $\optmd < (2\delta+\epsilon)n$, and
    \item $\optmd > (3\delta-\epsilon)n$.
\end{enumerate}

\end{theorem}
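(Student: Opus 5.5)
This statement is imported from Chleb{\'\i}k \& Chleb{\'\i}kov{\'a}, so the simplest course is to cite it. If I had to reconstruct it, I would compose a known gap-preserving chain ending in a structured 2SAT instance. The overall shape is: Max-E3-Lin-2 with bounded occurrences $\to$ a gap problem for Vertex Cover on cubic graphs (via the amplifier and gadget constructions of Chleb{\'\i}k \& Chleb{\'\i}kov{\'a}) $\to$ \md. The last step is the only part specific to the formula structure in the statement, and I would carry it out as follows.

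First, take a cubic graph $H=(V_H,E_H)$ from the hard Vertex Cover instances. Introduce one variable $x_v$ per vertex $v$. For each edge $uv\in E_H$ add the clause $(x_u\vee x_v)$, and for each vertex add one clause containing the single negated occurrence of $x_v$. Each variable then occurs $3$ times positively (once per incident edge) and exactly once negated, which gives exactly $4$ occurrences. No clause repeats, because $H$ is simple.

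Second, check that $\optmd$ equals the minimum vertex cover of $H$, up to the normalisation the gap statement uses.
\begin{itemize}
\item Any vertex cover $S$ yields an assignment ($x_v$ true iff $v\in S$) violating exactly $|S|$ clauses.
\item Conversely, take an assignment violating some edge clause $(x_u\vee x_v)$. Flipping $x_u$ to true satisfies that clause and violates at most one new negated clause. So an optimal assignment can be assumed to satisfy every edge clause, and it then corresponds to a vertex cover.
\end{itemize}

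Third, translate the Vertex Cover gap into the $(2\delta+\epsilon)n$ versus $(3\delta-\epsilon)n$ thresholds. The source's gap for cubic Vertex Cover comes from Max-E3-Lin-2 through amplifiers. There the optimum cover size splits into a fixed baseline determined by the gadgets plus a term proportional to the number of unsatisfied linear equations. The parameter $\delta\le 1/194$ arises from the amplifier density and from the $1-\eta$ versus $1/2+\eta$ gap of H{\aa}stad's theorem. To match the statement, the negated clauses must be designed so that the baseline part of the cover is paid for by satisfied gadget clauses. Then only the ``excess'' of the cover, which is $2\delta n$ versus $3\delta n$ in the two cases, shows up as violated clauses.

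This third step is where I expect the real obstacle. A naive encoding gives $\optmd=\tau(H)\approx n/2$, so a relative gap of the form $2\delta$ versus $3\delta$ does not appear directly. Recovering the statement's normalisation requires reusing the specific gadget structure of Chleb{\'\i}k \& Chleb{\'\i}kov{\'a} rather than a black-box reduction. For that reason, in the paper I would simply invoke their Section 3.1 result as stated.
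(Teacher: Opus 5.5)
Your final plan, invoking Section~3.1 of Chleb{\'\i}k \& Chleb{\'\i}kov{\'a} as a black box, is exactly what the paper does: it gives no proof of this statement and simply cites it. Do not rely on the sketched reconstruction, though. The vertex clauses $(\overline{x_v})$ are unit clauses, whereas the paper's use of this theorem needs exactly two literals per clause ($2n$ clauses, two clause edges per clause node). And, as you observe, these instances always have $\optmd=\tau(H)\geq n/2$, so they can never fall into case (i), where $\optmd<(2\delta+\epsilon)n<\frac{5}{388}n$.
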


We describe a gap-preserving reduction from the above problem to \badt on a \emph{complete} graph $G$.
Our reduction uses ``wheel-gadgets'' (specifically, \emph{hexagrams}) similar to those used in the reduction by Charikar et~al. \yrcite{charikar2005clustering} to show APX-hardness for CC. Wheel gadgets were first introduced in the well-known reduction from 3SAT to 3-dimensional matching by Papadimitriou \yrcite{papadimitriou2003computational}. Our reduction differs from \cite{charikar2005clustering} in two major ways. First, we reduce from the \md problem, whereas they reduce from the max 2-colorable subgraph problem in 3-uniform hypergraphs. Secondly, our clause gadgets are smaller than their hyperedge gadgets, greatly simplifying the overall analysis.

\subsubsection{Reduction}
In the following we describe the set of positive edges of $G$. All non-described node pairs will be connected by a negative edge.

First, we describe the node gadgets.
Consider a 2SAT formula from Theorem~\ref{thm:chlebik}.
For each variable $z$ in the formula, create a hexagram consisting of 12 nodes, as shown in Figure~\ref{fig:nodereduction}.
The hexagram related to $z$ is called the \emph{$z$-hexagram}.
The inner part of a hexagram consists of a cycle of length 6.
Each edge of this cycle is part of one outward pointing triangle, called a \emph{tooth}. 
The node of a tooth that is not part of the inner 6-cycle is called a \emph{crown}.
There are 6 crown nodes, label them as $z_i$ for $1 \leq i\leq 6$.
Call a crown $z_i$ even (resp. odd) if its subscript $i$ is even (resp. odd).
Likewise, call a tooth even (resp. odd) if it contains an even (resp. odd) crown node.
All three even teeth of a hexagram are pairwise vertex-disjoint, and so are the odd teeth.

Next, we describe the clause gadgets. Let the $\ell$-th clause $C_\ell$ in the 2SAT formula be, for example, $C_\ell = \overline{x} \lor y$.
Create a clause node $c_\ell$.
Because $x$ appears negated in $C_\ell$, we create a clause edge by connecting $c_\ell$ to an odd crown $x_{\text{odd}}$ in the $x$-hexagram. 
Create a second clause edge by connecting $c_\ell$ to an even crown $y_{\text{even}}$ in the $y$-hexagram (even, because $y$ is not negated in $C_\ell$). For our choice of $y_{\text{even}}$, we only require that $y_{\text{even}}$ is not connected to another clause node $c_{k}$ for $k \neq \ell$.
Do this for all clauses.
Theorem~\ref{thm:chlebik} guarantees that every crown is part of at most one clause edge, since every variable in the 2SAT formula appears exactly three times as a positive literal, and one time as negated.

Figure~\ref{fig:nodereduction} illustrates the construction corresponding to $C_\ell$.
The node hexagrams and the clause edges together form the set of positive edges of our complete graph $G$.
All remaining node pairs are labeled as negative.

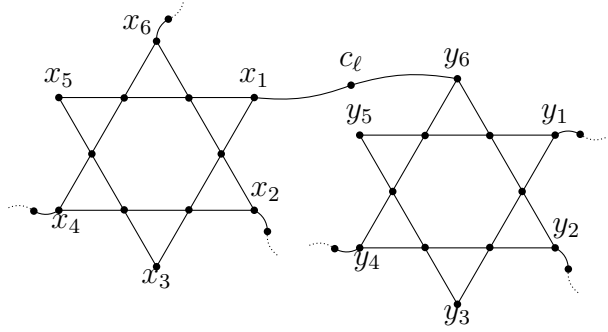
\begin{figure}[t!]
\centering
\begin{tikzpicture}
\def\ri{0.86}
\def\ro{1.5}
\def\xo{4cm}
\def\yo{-0.5cm}


    \node (A) at (360:\ri) {};
    \node[label={[xshift=0cm, yshift=-0.15cm]$x_1$}] (B) at (30:\ro) {};
    \node (C) at (60:\ri) {};
    
    \node[label={[xshift=-0.25cm, yshift=-0.2cm]$x_6$}] (D) at (90:\ro) {};
    \node (E) at (120:\ri) {};
    \node[label={[xshift=0cm, yshift=-0.15cm]$x_5$}] (F) at (150:\ro) {};
    \node (G) at (180:\ri) {};
    \node[label={[xshift=0.1cm, yshift=-0.6cm]$x_4$}] (H) at (210:\ro) {};
    \node (I) at (240:\ri) {};
    \node[label={[xshift=0cm, yshift=-0.58cm]$x_3$}] (J) at (270:\ro) {};
    \node (K) at (300:\ri) {};
    \node[label={[xshift=0.15cm, yshift=-0.17cm]$x_2$}] (L) at (330:\ro) {};

    \node (c1) at (85:1.8) {};
    \node (c2) at (80:2.1) {};
    \node (c3) at (205:1.8) {};
    \node (c4) at (200:2.1) {};
    \node (c5) at (325:1.8) {};
    \node (c6) at (320:2.1) {};

    \draw (D.center) -- (H.center) -- (L.center) -- cycle;
    \draw (J.center) -- (B.center) -- (F.center) -- cycle;



    
    \foreach \n in {A,B,C,D,E,F,G,H,I,J,K,L,c1,c3,c5}
        \node at (\n)[circle,fill,inner sep=1pt]{};

    \path [black,bend left] (D.center) edge (c1.center);
    \path [black,bend right,densely dotted] (c1.center) edge (c2.center);
    \path [black,bend left] (H.center) edge (c3.center);
    \path [black,bend right,densely dotted] (c3.center) edge (c4.center);
    \path [black,bend left] (L.center) edge (c5.center);
    \path [black,bend right,densely dotted] (c5.center) edge (c6.center);


    \node[xshift=\xo, yshift=\yo] (c12) at (25:1.8) {};
    \node[xshift=\xo, yshift=\yo] (c22) at (20:2.1) {};
    \node[xshift=\xo, yshift=\yo] (c32) at (205:1.8) {};
    \node[xshift=\xo, yshift=\yo] (c42) at (200:2.1) {};
    \node[xshift=\xo, yshift=\yo] (c52) at (325:1.8) {};
    \node[xshift=\xo, yshift=\yo] (c62) at (320:2.1) {};

    \node[xshift=\xo, yshift=\yo] (A2) at (360:\ri) {};
    \node[label={[xshift=0cm, yshift=-0.15cm]$y_1$},xshift=\xo, yshift=\yo] (B2) at (30:\ro) {};
    \node[xshift=\xo, yshift=\yo] (C2) at (60:\ri) {};
    
    \node[label={[xshift=0cm, yshift=-0.15cm]$y_6$},xshift=\xo, yshift=\yo] (D2) at (90:\ro) {};
    \node[xshift=\xo, yshift=\yo] (E2) at (120:\ri) {};
    \node[label={[xshift=0cm, yshift=-0.15cm]$y_5$},xshift=\xo, yshift=\yo] (F2) at (150:\ro) {};
    \node[xshift=\xo, yshift=\yo] (G2) at (180:\ri) {};
    \node[label={[xshift=0.1cm, yshift=-0.6cm]$y_4$},xshift=\xo, yshift=\yo] (H2) at (210:\ro) {};
    \node[xshift=\xo, yshift=\yo] (I2) at (240:\ri) {};
    \node[label={[xshift=0cm, yshift=-0.58cm]$y_3$},xshift=\xo, yshift=\yo] (J2) at (270:\ro) {};
    \node[xshift=\xo, yshift=\yo] (K2) at (300:\ri) {};
    \node[label={[xshift=0.15cm, yshift=-0.17cm]$y_2$},xshift=\xo, yshift=\yo] (L2) at (330:\ro) {};

    \node[label={[xshift=0cm, yshift=-0.1cm]$c_\ell$},xshift=\xo, yshift=\yo] (MID) at (135:2) {};

    \draw (D2.center) -- (H2.center) -- (L2.center) -- cycle;
    \draw (J2.center) -- (B2.center) -- (F2.center) -- cycle;

    \draw (B.center) to[bend right=13] (MID.center);
    \draw (MID.center) to[bend left=13] (D2.center);

    \foreach \n in {A2,B2,C2,D2,E2,F2,G2,H2,I2,J2,K2,L2,c12,c32,c52,MID}
        \node at (\n)[circle,fill,inner sep=1pt]{};

    \path [black,bend left] (B2.center) edge (c12.center);
    \path [black,bend right,densely dotted] (c12.center) edge (c22.center);
    \path [black,bend left] (H2.center) edge (c32.center);
    \path [black,bend right,densely dotted] (c32.center) edge (c42.center);
    \path [black,bend left] (L2.center) edge (c52.center);
    \path [black,bend right,densely dotted] (c52.center) edge (c62.center);
    
\end{tikzpicture}
\caption{Part of the graph $G$ constructed from an instance of Theorem~\ref{thm:chlebik}, if it contains a clause $C_\ell = \overline{x} \lor y$. The left (resp. right) hexagram is the node gadget corresponding to variable $x$ (resp. $y$). The $x_i$ nodes are the crown nodes of the hexagram of $x$, and similarly for $y_i$. The clause edges corresponding to clause $C_\ell$ are $(x_1,c_\ell)$ and $(c_\ell,y_6)$.}
\label{fig:nodereduction}
\end{figure}

\begin{figure}
    \centering
    \begin{subfigure}[t]{0.5\columnwidth}
        \centering
        \begin{tikzpicture}
\def\ri{0.86}
\def\ro{1.5}
\def\xo{4cm}
\def\yo{-0.5cm}


    \node (A) at (360:\ri) {};
    \node[label={[xshift=0cm, yshift=-0.15cm]$x_{\text{odd}}$}] (B) at (30:\ro) {};
    \node (C) at (60:\ri) {};
    
    \node (D) at (90:\ro) {};
    \node (E) at (120:\ri) {};
    \node (F) at (150:\ro) {};
    \node (G) at (180:\ri) {};
    \node (H) at (210:\ro) {};
    \node (I) at (240:\ri) {};
    \node (J) at (270:\ro) {};
    \node (K) at (300:\ri) {};
    \node (L) at (330:\ro) {};

    \node (c1) at (85:1.8) {};
    \node (c2) at (80:2.1) {};
    \node (c3) at (205:1.8) {};
    \node (c4) at (200:2.1) {};
    \node (c5) at (325:1.8) {};
    \node (c6) at (320:2.1) {};
    \node[label={[xshift=0cm, yshift=-0.15cm]$c_k$}] (c7) at (25:2.1) {};
    \node (c8) at (20:2.4) {};

    \draw (D.center) -- (H.center) -- (L.center) -- cycle;
    \draw (J.center) -- (B.center) -- (F.center) -- cycle;



    
    \foreach \n in {A,B,C,D,E,F,G,H,I,J,K,L,c1,c3,c5,c7}
        \node at (\n)[circle,fill,inner sep=1pt]{};

    \path [black,bend left] (D.center) edge (c1.center);
    \path [black,bend right,densely dotted] (c1.center) edge (c2.center);
    \path [black,bend left] (H.center) edge (c3.center);
    \path [black,bend right,densely dotted] (c3.center) edge (c4.center);
    \path [black,bend left] (L.center) edge (c5.center);
    \path [black,bend right,densely dotted] (c5.center) edge (c6.center);
    \path [black,bend right] (B.center) edge (c7.center);
    \path [black,bend left,densely dotted] (c7.center) edge (c8.center);

    \draw[very thick, red] (A.center) -- (B.center) --  (C.center) -- cycle;
    \draw[very thick, red] (G.center) -- (H.center) --  (I.center) -- cycle;
    \draw[very thick, red] (E.center) -- (C.center);
    \draw[very thick, red] (E.center) -- (D.center);
    \draw[very thick, red] (K.center) -- (A.center);
    \draw[very thick, red] (K.center) -- (L.center);

\end{tikzpicture}
        \caption{Inconsistent cover (red).}
        \label{Fig:SubLeft}
    \end{subfigure}\hfill
    \begin{subfigure}[t]{0.5\columnwidth}
        \centering
        \begin{tikzpicture}
\def\ri{0.86}
\def\ro{1.5}
\def\xo{4cm}
\def\yo{-0.5cm}


    \node (A) at (360:\ri) {};
    \node[label={[xshift=0cm, yshift=-0.15cm]$x_{\text{odd}}$}] (B) at (30:\ro) {};
    \node (C) at (60:\ri) {};
    
    \node (D) at (90:\ro) {};
    \node (E) at (120:\ri) {};
    \node (F) at (150:\ro) {};
    \node (G) at (180:\ri) {};
    \node (H) at (210:\ro) {};
    \node (I) at (240:\ri) {};
    \node (J) at (270:\ro) {};
    \node (K) at (300:\ri) {};
    \node (L) at (330:\ro) {};

    \node (c1) at (85:1.8) {};
    \node (c2) at (80:2.1) {};
    \node (c3) at (205:1.8) {};
    \node (c4) at (200:2.1) {};
    \node (c5) at (325:1.8) {};
    \node (c6) at (320:2.1) {};
    \node[label={[xshift=0cm, yshift=-0.15cm]$c_k$}] (c7) at (25:2.1) {};
    \node (c8) at (20:2.4) {};

    \draw (D.center) -- (H.center) -- (L.center) -- cycle;
    \draw (J.center) -- (B.center) -- (F.center) -- cycle;



    
    \foreach \n in {A,B,C,D,E,F,G,H,I,J,K,L,c1,c3,c5,c7}
        \node at (\n)[circle,fill,inner sep=1pt]{};

    \path [black,bend left] (D.center) edge (c1.center);
    \path [black,bend right,densely dotted] (c1.center) edge (c2.center);
    \path [black,bend left] (H.center) edge (c3.center);
    \path [black,bend right,densely dotted] (c3.center) edge (c4.center);
    \path [black,bend left] (L.center) edge (c5.center);
    \path [black,bend right,densely dotted] (c5.center) edge (c6.center);
    \path [very thick,blue,bend right] (B.center) edge (c7.center);
    \path [black,bend left,densely dotted] (c7.center) edge (c8.center);

    \draw[very thick, blue] (D.center) -- (E.center) --  (C.center) -- cycle;
    \draw[very thick, blue] (G.center) -- (H.center) --  (I.center) -- cycle;
    \draw[very thick, blue] (G.center) -- (H.center) --  (I.center) -- cycle;
    \draw[very thick, blue] (K.center) -- (L.center) --  (A.center) -- cycle;    
\end{tikzpicture}
        \caption{Consistent cover (blue).}
        \label{Fig:SubRight}
    \end{subfigure}
    \caption{Example to illustrate case (b) in the proof of Lemma 6.}
    \label{fig:transform}
\end{figure}
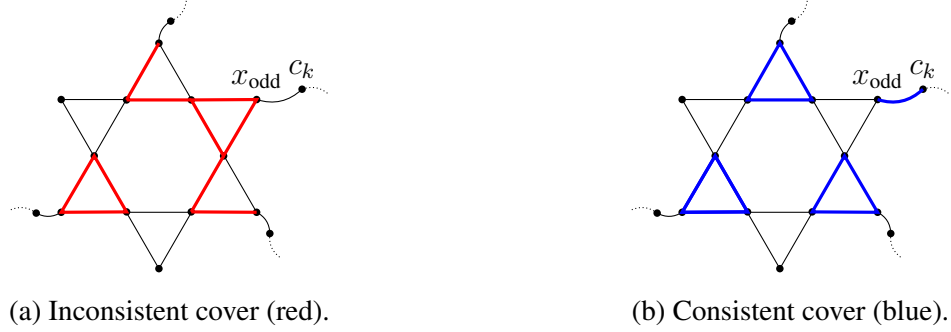

\subsubsection{Intuition}
The intuition is that to cover a hexagram---if one disregards bad triangles formed by clause edges for now---an optimal \badt solution will pick all 9 teeth edges from either the set of even or odd teeth. These are the only optima, all other covers consist of at least 10 edges.

The choice of odd or even teeth edges will capture whether the corresponding variable will be assigned to true or false. 
Clause edges will guide the variable assignment such that as many as possible clauses are satisfied.
However, clause edges also form bad triangles (with hexagram edges and with each other), so we cannot simply assume that an optimal bad triangle cover only selects even or odd teeth edges.

Call a cover \emph{consistent} if for every hexagram, the cover contains either all edges from the even teeth, or all edges from the odd teeth, and no other hexagram edges. See for example, Figure~\ref{fig:transform}. We will show that there exists a consistent optimal \badt solution of $G$.
This allows us to relate its optimal value to the \md optimum.
By applying Theorem~\ref{thm:chlebik}, we obtain a gap which leads to Theorem~\ref{thm:hardnesscomplete}.

\subsubsection{Analyzing the gap}
Let $\opt(G)$ be the optimal value of \badt on $G$, and $\optmd$ be the optimal value of \md on instances from Theorem~\ref{thm:chlebik}.
The following two lemmas are the crux of the reduction.

\begin{lemma}\label{lem:smalleropt}
$\opt(G) \leq 11n+\optmd.$
\end{lemma}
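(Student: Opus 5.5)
The plan is to prove the bound constructively. I will take an optimal assignment for the \md instance and build from it a bad triangle cover of $G$ with exactly $11n+\optmd$ edges.

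\textbf{Counting the clauses.} Each variable occurs exactly 4 times and each clause has two literals. Hence the formula has $2n$ clauses, and $G$ has $4n$ clause edges, two per clause node.

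\textbf{The cover $F$.} Fix an assignment attaining $\optmd$.
\begin{itemize}
\item For each variable $z$: if $z$ is true, put the 9 edges of the three even teeth of the $z$-hexagram into $F$; if $z$ is false, put the 9 edges of the three odd teeth into $F$. This is a consistent choice and costs $9n$ in total.
\item For a clause $C_\ell$ satisfied by the assignment: pick one satisfied literal and keep the clause edge from $c_\ell$ to that literal's crown. Put only the other clause edge of $c_\ell$ into $F$, at cost $1$.
\item For an unsatisfied clause: put both of its clause edges into $F$, at cost $2$.
\end{itemize}
The total size is $9n + 2n + \optmd = 11n+\optmd$.

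\textbf{Why the kept clause edge attaches to an isolated crown.} A positive literal $y$ in $C_\ell$ is attached to an even crown $y_{\text{even}}$. If that literal is satisfied, $y$ is true, so the even teeth of $y$ were removed, including both tooth edges at $y_{\text{even}}$. Symmetrically, a negated literal $\overline{x}$ is attached to an odd crown. If it is satisfied, $x$ is false, so the odd teeth were removed, including both tooth edges at $x_{\text{odd}}$. So every kept clause edge ends at a crown whose two hexagram edges are all in $F$.

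\textbf{Feasibility.} I will not enumerate bad triangles one by one. Instead I will show that the positive graph $G^+\setminus F$ is a disjoint union of cliques. Since $G$ is complete, every non-positive pair is negative, and a union of cliques clearly contains no triangle with exactly one negative edge.
\begin{itemize}
\item Inside a hexagram, removing the edges of all even (resp.\ odd) teeth leaves only the edges of the three odd (resp.\ even) teeth. Every inner-cycle edge lies in exactly one tooth, so the removed parity takes half of the inner cycle and the other half survives as tooth edges. The three surviving teeth are pairwise vertex-disjoint triangles. The three crowns of the removed parity are isolated within the hexagram.
\item Each clause node keeps at most one clause edge. By the previous paragraph, that edge ends at a crown that is isolated within its hexagram.
\item By Theorem~\ref{thm:chlebik}, every crown lies on at most one clause edge. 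So no crown receives two kept clause edges.
\end{itemize}
Therefore the components of $G^+\setminus F$ are single vertices, single edges (a clause node joined to an isolated crown), and the surviving tooth triangles. All of these are cliques, so $F$ is a feasible cover.

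\textbf{Main obstacle.} The delicate point is the parity convention: true means removing the even teeth, matching the rule that positive literals attach to even crowns. With this convention, a kept clause edge always lands on a crown whose tooth was deleted. The component argument also disposes of the edge cases, such as a clause containing both $x$ and $\overline{x}$, and triangles mixing clause edges with the inner cycle. Neither needs separate treatment.
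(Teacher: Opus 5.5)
Your proof is correct and is essentially the paper's argument. It uses the same consistent choice of teeth (even for true, odd for false), keeps one clause edge for each satisfied clause and removes both for each unsatisfied one, giving $9n+(2n-\optmd)+2\optmd = 11n+\optmd$. The paper only asserts feasibility and says one of the two clause edges suffices without saying which; you specify that the kept edge must go to a satisfied literal's crown, and you verify feasibility by showing the surviving positive graph is a disjoint union of cliques.
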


\begin{lemma}\label{lem:biggeropt}
$\opt(G) \geq 11n+\optmd.$
\end{lemma}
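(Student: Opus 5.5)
The plan is to take an optimal cover $F$ of $G$, transform it into a \emph{consistent} cover $F'$ with $|F'| \le |F|$, and then show that every consistent cover has size at least $11n+\optmd$.

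\textbf{Counting a consistent cover.} The formula has $n$ variables, each with exactly 4 occurrences and 2 literals per clause, so it has exactly $2n$ clauses. A consistent cover $F'$ contains exactly $9$ hexagram edges per hexagram, giving $9n$ in total. Define an assignment from $F'$: set $z$ to false if the odd teeth of the $z$-hexagram are in $F'$, and to true if the even teeth are. Under this rule, a literal attached to crown $z_i$ is satisfied exactly when the tooth of $z_i$ is covered.

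Now consider a clause node $c_\ell$ joined to crowns $a$ and $b$, which lie in different hexagrams.
\begin{itemize}
\item The pair $ab$ is negative, so $c_\ell, a, b$ is a bad triangle. Hence $F'$ contains at least one clause edge of $c_\ell$.
\item Suppose the literal at crown $a$ is unsatisfied, so the tooth edges $au$ at $a$ are not in $F'$. Since $c_\ell u$ is negative, $c_\ell, a, u$ is a bad triangle, which forces $c_\ell a\in F'$.
\end{itemize}
So an unsatisfied clause forces both of its clause edges into $F'$, and every clause forces at least one. Summing over clauses,
\[
|F'| \;\ge\; 9n + 2n + \#\{\text{unsatisfied clauses}\} \;\ge\; 11n + \optmd .
\]

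\textbf{Normalization to a consistent cover (main obstacle).} I would process hexagrams one at a time. For the $z$-hexagram, let $L$ be the set of edges of $F$ that lie in the hexagram, together with the (at most four) clause edges of $F$ at its crowns. Replace $L$ as follows:
\begin{itemize}
\item remove all hexagram edges of $L$ and insert the $9$ edges of the even teeth or of the odd teeth;
\item keep every clause edge already in $F$;
\item add the clause edges at crowns whose teeth are not chosen.
\end{itemize}
Only hexagram edges are ever removed, and clause edges are only added. Therefore bad triangles made of two clause edges (through a negative crown--crown pair) stay covered. The triangles that can newly lose coverage all consist of hexagram edges plus clause edges at this hexagram's crowns, and the replacement covers them by construction.

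It remains to show that one of the two parities does not increase the cost. This needs a local inequality: $|L| \ge 9 + (\text{number of clause edges that the better parity must add})$. The key facts are:
\begin{itemize}
\item If a crown $z_i$ has a clause edge $c z_i\notin F$, then both tooth edges at $z_i$ must lie in $F$, because of the triangles $c, z_i, u$.
\item Any cover of the hexagram alone that is not one of the two parity covers uses at least $10$ edges.
\end{itemize}
The $z$-hexagram has 3 even crowns with clause edges and 1 odd crown with a clause edge. I would therefore case-split on the number of hexagram edges in $F$ (exactly $9$, $10$, $11$, or at least $12$). In each case I would count how many clause edges of $F$ are present and how many crowns have both tooth edges in $F$, and pick the parity that minimizes the number of added clause edges. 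With $\ge 12$ hexagram edges the odd choice costs at most $9+3$, so it is never worse. The delicate cases are $10$ and $11$, where I would use the forcing fact above to find enough clause edges already in $F$.

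Each replacement keeps feasibility and does not increase the size, and it leaves all other hexagrams unchanged. Iterating over all hexagrams therefore yields a consistent optimal cover, and the counting above completes the lemma.
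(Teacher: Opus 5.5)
Your overall plan matches the paper's: convert an optimal cover into a consistent one hexagram by hexagram via an exchange, then count $9$ edges per hexagram plus one or two clause edges per clause. You split on the number $k$ of hexagram edges in $F$, where the paper splits on whether the clause edge at the negated crown is in the cover. The genuine gap is that you never reduce to an optimal cover using only positive edges. The paper does this first, and every fact you use depends on it.

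\begin{itemize}
\item If $F$ covers a triangle $c,z_i,u$ by the negative pair $cu$, your forcing fact fails.
\item If some bad triangles inside a hexagram are covered by negative pairs among its nodes, the positive hexagram edges of $F$ need not cover the hexagram. Then neither $k\geq 9$ nor the characterization of $9$-edge covers applies to them. Such negative edges are not in $L$ and are never removed, so $|F'|\leq|F|$ does not follow from comparing $|L|$ with the replacement.
\item In the counting step, a clause triangle $c_\ell ab$ can be covered at no extra cost by the negative pair $ab$ when both literals are satisfied. So the claim that $F'$ contains at least one clause edge of $c_\ell$ is false as stated.
\end{itemize}

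The fix is one line. Every negative edge of $G$ lies in at most one bad triangle, so replace each negative edge of an optimal cover by a positive edge of that triangle. This gives an optimal cover with only positive edges, and then your facts hold.

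The local inequality is also only planned, and the cases you flag as delicate are actually immediate. Only one odd crown (the one for the negated occurrence) carries a clause edge, so choosing the even teeth adds at most one clause edge. Let $A$ be the set of clause edges of $F$ at the hexagram's crowns. For every $k\geq 10$ the even replacement then has size at most $9+1+|A|\leq k+|A|=|L|$. The forcing fact is needed only for $k=9$. In that case the hexagram edges of $F$ are exactly the even or the odd teeth. The forcing fact then shows that the clause edges at the crowns of the omitted teeth are already in $F$, so nothing has to be added. The case $k\leq 8$ is impossible. With these two repairs your argument is complete.
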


Combining Lemma~\ref{lem:smalleropt}, Lemma~\ref{lem:biggeropt} with Theorem~\ref{thm:chlebik} yields Theorem~\ref{thm:hardnesscomplete} after a brief calculation.

\begin{theorem}
\label{thm:hardnesscomplete}
For any
$\gamma>0$, it is NP-hard to approximate \badt on complete graphs with a factor of $\frac{2137}{2136}-\gamma$.
\end{theorem}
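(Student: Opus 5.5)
The plan is a standard gap-amplification argument: compose the polynomial-time reduction of the previous subsection with the gap problem of Theorem~\ref{thm:chlebik}, and then compute the resulting ratio.

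First I check that the reduction is polynomial and produces a complete graph. Each instance has exactly $4$ occurrences per variable and two literals per clause, so it has $2n$ clauses. The graph $G$ therefore has $12n+2n=14n$ nodes, and all pairs not listed as positive are labelled negative. So $G$ is complete and is built in $\bigO(n^2)$ time.

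Next, Lemma~\ref{lem:smalleropt} and Lemma~\ref{lem:biggeropt} together give the exact identity $\opt(G)=11n+\optmd$. Applying it to the two cases of Theorem~\ref{thm:chlebik}:
\begin{itemize}
\item in case (i), $\opt(G)<(11+2\delta+\epsilon)n$;
\item in case (ii), $\opt(G)>(11+3\delta-\epsilon)n$.
\end{itemize}
Suppose an algorithm approximates \badt on complete graphs within a factor $\rho<\frac{11+3\delta-\epsilon}{11+2\delta+\epsilon}$. In case (i) it outputs a cover of size less than $\rho(11+2\delta+\epsilon)n\le(11+3\delta-\epsilon)n$. In case (ii) every cover, including its output, has size greater than $(11+3\delta-\epsilon)n$. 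Comparing the output size with the threshold $(11+3\delta-\epsilon)n$ thus distinguishes the two cases, which is NP-hard.

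Finally I choose the parameters. Taking the largest allowed value $\delta=\frac1{194}$ gives, as $\epsilon\to 0$,
\[
\frac{11+3\delta}{11+2\delta}=\frac{2134+3}{2134+2}=\frac{2137}{2136}.
\]
The ratio is continuous in $\epsilon$, so for any $\gamma>0$ we can fix a constant $\epsilon>0$, independent of $n$ and with $\epsilon<\delta/2$ as Theorem~\ref{thm:chlebik} permits, such that $\frac{11+3\delta-\epsilon}{11+2\delta+\epsilon}>\frac{2137}{2136}-\gamma$. This yields the theorem.

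All of the real difficulty sits in Lemma~\ref{lem:biggeropt}, namely that some optimal cover is consistent, which is assumed here. For this theorem itself, the only points needing care are keeping $\epsilon$ a fixed constant so the reduction stays polynomial, and handling the strict inequalities correctly.
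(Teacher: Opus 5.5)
Your proposal is correct and follows the paper's proof. Like the paper, you combine Lemmas~\ref{lem:smalleropt} and~\ref{lem:biggeropt} into $\opt(G)=11n+\optmd$, apply the gap of Theorem~\ref{thm:chlebik} with $\delta=\frac{1}{194}$, and take $\epsilon$ small; you additionally spell out details the paper leaves implicit, namely the polynomial size of $G$, the explicit distinguishing procedure, and the continuity argument in $\epsilon$.
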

\begin{proof}
By Lemma~\ref{lem:smalleropt} and Lemma~\ref{lem:biggeropt} we find that $\opt(G) = 11n+\optmd.$ 
Using Theorem~\ref{thm:chlebik} for $\delta=\frac{1}{194}$, it follows that for any $0<\epsilon< \delta/2$, it is NP-hard to distinguish between
\[
\begin{split}
    \opt(G) &< \bigg(11+\frac{2}{194}+\epsilon\bigg)n = \bigg(\frac{2136}{194} + \epsilon \bigg)n
    \quad\text{ and }\\
    \opt(G) &> \bigg(11+\frac{3}{194}-\epsilon\bigg)n = \bigg(\frac{2137}{194} - \epsilon \bigg)n.
\end{split}
\]
By setting $\epsilon$ sufficiently small (depending on $\gamma>0$), the result follows.
\end{proof}

\subsection{Results for \stcmin, \cc and \cd}
A very useful aspect of the previous reduction is that it also applies to three other problems: the \stcmin, \cc and \cd problems.

\begin{theorem}
\label{thm:cchardness}
For any
$\gamma>0$, it is NP-hard to approximate \stcmin, \cc and \cd on complete graphs with a factor of $\frac{2137}{2136}-\gamma$.
\end{theorem}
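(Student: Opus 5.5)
The plan is to reuse the graph $G$ built in the reduction and show that, for each of the three problems, the optimum on $G$ also equals $11n+\optmd$. Once that holds, the calculation in the proof of Theorem~\ref{thm:hardnesscomplete} carries over unchanged and gives the same gap $\frac{2137}{2136}-\gamma$.

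For \stcmin, there is nothing new to prove. On complete graphs \stcmin is identical to \badt \cite{gruttemeier2020strong,pmlr-v162-veldt22a}, so its optimum is $\opt(G)$.

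For \cc, we sandwich $\optcc(G)$ between two bounds.
\begin{itemize}
\item The lower bound $\optcc(G)\geq \opt(G)$ holds for every complete graph, since the disagreements of any clustering cover all bad triangles. Lemma~\ref{lem:biggeropt} then gives $\optcc(G)\geq 11n+\optmd$.
\item For the upper bound, we turn the consistent cover from the proof of Lemma~\ref{lem:smalleropt} into a clustering whose disagreements are exactly that cover, all of them positive edges.
\end{itemize}
To build the clustering, take an optimal assignment. For each variable, the chosen parity class (say even) has its teeth edges cut. The remaining hexagram positive edges are the three opposite-parity (odd) teeth, which are vertex-disjoint triangles. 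Form the following clusters:
\begin{itemize}
\item each remaining tooth triangle is its own cluster;
\item each clause node $c_\ell$ joins the cluster of one of its crowns that lies in an uncut tooth, if one exists, and otherwise forms a singleton.
\end{itemize}
We then check that there are no negative intra-cluster pairs. Each cluster is a triangle, a triangle plus one clause node adjacent only to its crown, or a singleton. A clause node attached to a tooth is positively adjacent only to the crown, so it creates two negative intra-cluster pairs. That kills this naive choice, and the clustering has to be reconsidered.

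The fix is to put $c_\ell$ together with only its crown $z$ as a 2-cluster, cutting the other two edges of $z$'s tooth instead of the clause edge. This costs extra, so pure clusterings of this form do not obviously reach $11n+\optmd$. So the intended route is more likely that the consistent solution is already clusterable. The alternative is to argue the lower bound for \cc (and \cd) directly with the same local accounting as Lemma~\ref{lem:biggeropt}, and to get the upper bound by checking the consistent cover more carefully.

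Concretely, my upper-bound construction has two cases per clause:
\begin{itemize}
\item \textbf{Satisfied clause.} At least one clause edge $(c_\ell,z)$ goes to a crown $z$ whose tooth is cut, since the satisfying literal's parity is the cut class. I would make $\{c_\ell,z\}$ a cluster, which requires $z$ to lose both its tooth edges. This is already the case because its tooth is entirely in the cover. The other clause edge is cut, paying $1$.
\item \textbf{Unsatisfied clause.} Both clause crowns lie in uncut teeth, so $c_\ell$ is a singleton, paying $2$.
\end{itemize}
The count is then $9n$ for the teeth plus the clause payments. This must be matched to the $11n$ baseline of Lemma~\ref{lem:smalleropt}, where the $2n$ presumably comes from one cut clause edge per clause (there are $2n$ clauses, since each variable occurs exactly 4 times) and the $\optmd$ term from the second edge of unsatisfied clauses. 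That matches exactly. All disagreements are positive edges, so the clustering is also a valid cluster-deletion solution, and therefore $\optcd(G)\leq 11n+\optmd$.

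For the lower bounds, $\optcd\geq\optcc\geq\opt$ hold on complete graphs because cluster deletion is the more constrained problem. Together with Lemma~\ref{lem:biggeropt} this gives equality for all three problems, and the gap calculation finishes the proof.

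The main obstacle is verifying that this clustering has no negative intra-cluster pair and that the number of cut positive edges is exactly $9n+2n+\optmd$. This requires each satisfied clause's chosen crown to have its whole tooth in the cover, so that the crown is isolated from the hexagram, and it requires the crown-to-clause uniqueness guaranteed by Theorem~\ref{thm:chlebik}.
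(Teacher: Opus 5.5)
Your argument for \cc and \cd is correct and is essentially the paper's. The paper only asserts that a consistent optimal cover is realized by a clustering of the same cost. Your final construction is exactly that clustering: the three uncut teeth of each hexagram as triangle clusters, each satisfied clause node paired with a satisfying crown whose tooth is entirely cut, and all remaining nodes as singletons. It has no negative intra-cluster pair and its disagreements are $9n+2n+\optmd$ positive edges. Combined with $\optcd\geq\optcc\geq\opt\geq 11n+\optmd$, this gives equality. You should drop the abandoned first attempt from the write-up; it is not needed.

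The \stcmin step, however, rests on a false claim. The prior-work identity with \badt on complete graphs is for \stc, which may also use negative edges (i.e., add edges). \stcmin must cover every bad triangle with positive edges only, and in general its optimum can be far above $\opt$. For example, a negative edge $vw$ with $k$ common positive neighbours costs $1$ for \badt but $k$ for \stcmin. As written, the \stcmin part is therefore unproven.

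The repair uses only what you already have. Any \stcmin solution is a bad triangle cover, so its optimum on $G$ is at least $\opt(G)\geq 11n+\optmd$. Conversely, the cover of Lemma~\ref{lem:smalleropt} (equivalently, the disagreement set of your clustering) consists only of teeth and clause edges, all positive, so the \stcmin optimum is at most $11n+\optmd$. The paper argues differently: every negative edge of $G$ lies in at most one bad triangle, so an optimal \badt cover can be converted into an all-positive one of the same size.
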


We briefly explain these problems and their known hardness results. 

\ptitle{\stcmin} This problem is similar to \badt on complete graphs, but \stcmin requires to use only positive edges for covering bad triangles \cite{SintosSTC}.
In our reduction, we have assumed an optimal cover only uses positive edges since every negative edge is part of at most one bad triangle. This implies Theorem~\ref{thm:cchardness} for \stcmin. We are not aware of existing inapproximability results for \stcmin in the literature.

\ptitle{\cc}
A consistent optimal cover yields an optimal correlation clustering with the same cost, that is, $\optcc(G) = \opt(G)$ for the constructed $G$. Repeating the argument of Theorem~\ref{thm:hardnesscomplete} leads to Theorem~\ref{thm:cchardness} for \cc. APX-hardness was shown by Charikar et~al. \yrcite{charikar2005clustering}, without explicitly determining a constant.
The only explicit hardness ratio for \cc is given by recent work of Cao et~al.  \yrcite{cao2024understanding}, who showed NP-hardness with respect to \emph{randomized} reductions\footnote{A randomized reduction technically does not show NP-hardness, but has other implications, see \cite{papadimitriou2003computational}.} to approximate \cc with ratio $24/23-\gamma$.
In contrast, our ratio is worse, but the reduction is deterministic.

\ptitle{\cd}
Cluster deletion (\cd) is a related problem, which asks to delete the minimum number of edges of an unsigned graph such that the remaining graph is a union of vertex-disjoint cliques \cite{SHAMIR2004173}. Note that the optimal correlation clustering for $G$ has only cliques. Consequently, $\optcd(G) = \optcc(G)$ for the constructed $G$. Repeating the argument of Theorem~\ref{thm:hardnesscomplete} leads to Theorem~\ref{thm:cchardness} for \cd. Shamir et~al. \yrcite{SHAMIR2004173} proved APX-hardness for \cd, but without an explicitly determined hardness ratio.

\section{From Covers to Clusters}
\label{sec:covtoclust}
As a final contribution, we address the following problem:
Given a complete signed graph $G=(V,E)$ and a feasible bad triangle cover $F \subseteq E$, can we efficiently find a clustering with at most $\alpha|F|$ mistakes, for some small $\alpha\geq1$?

Veldt \yrcite{pmlr-v162-veldt22a} showed that first switching (or \emph{flipping}) the signs from the edges in $F$, and then running pivot on the resulting graph, leads to a clustering with at most $2|F|$ mistakes.

Algorithm~\ref{algo:piv} shows an alternative pivot strategy that gives at most $\frac{3}{2}|F|$ mistakes. The algorithm is similar to a standard pivot approach \cite{ailon2008aggregating}, except when we encounter an edge in $F$. Assume $u$ is our pivot and $v$ is
the node we consider adding to the corresponding cluster $P_u$. The standard pivot will add $v$ to $P_u$ if and only if $uv \in E^+$. We modify this approach if $uv \in F$.
In that case,
Algorithm~\ref{algo:piv} adds $v$ with probability $1/4$ if $uv \in E^+$,
and with probability $3/4$ if $uv \in E^-$.

Next we state the approximation guarantee of the algorithm.
Note that Theorem~\ref{lem:clustmist} implies Theorem~\ref{thm:optcc}.
\begin{theorem}
\label{lem:clustmist}
Algorithm~\ref{algo:piv} returns a clustering with at most $\frac{3}{2}|F|$ mistakes in expectation.
\end{theorem}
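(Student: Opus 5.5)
We follow the standard Ailon--Charikar--Newman charging argument for pivot, adapted to the modified inclusion probabilities. A mistake on a pair $vw$ is always \emph{caused} by some pivot $u$. Either $u\in\{v,w\}$, so the pair is decided directly against the pivot, or $u\notin\{v,w\}$ and $u,v,w$ are all still unclustered when $u$ is chosen. We bound the expected number of mistakes of each type and charge them to the edges of $F$.

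\emph{Direct mistakes.} Suppose $u$ is the pivot and $v$ is considered. If $uv\notin F$, the decision agrees with the sign of $uv$, so no mistake occurs on $uv$. If $uv\in F$, a mistake on $uv$ occurs with probability $3/4$: a positive edge is excluded with probability $3/4$, and a negative edge is included with probability $3/4$. So each $F$-edge decided directly costs at most $3/4$.

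\emph{Triangle mistakes.} Fix a triple $t=\{u,v,w\}$ and condition on the event that all three are unclustered and that the first pivot chosen among them is $u$. Since the pivot is uniform among the remaining nodes, each vertex of $t$ is equally likely to be this first pivot. Given pivot $u$, the pair $vw$ suffers a mistake if $v$ and $w$ are split (with $vw$ positive) or both joined (with $vw$ negative). The probability of this depends only on the signs of $uv$ and $uw$ and on whether they lie in $F$, through the inclusion probabilities $p(uv),p(uw)\in\{0,1,1/4,3/4\}$. In the same event, $F$-edges among $uv,uw$ incur their $3/4$ direct cost.

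We then do an LP-duality-style accounting as in \cite{ailon2008aggregating}. Let $A_t$ be the event that triple $t$ is ``hit'' (all three unclustered and the pivot lies in $t$), and $q_t=\Pr[A_t]$. As in the original analysis, for each pair $e$ the events $A_t$ with $e\subset t$, together with the event that $e$ is decided directly, are disjoint. This gives a packing-type constraint: the probability that $e$ is decided within some triple plus the probability it is decided directly is at most $1$. It then suffices to prove a local inequality for every triple type. With expectation over the choice of pivot in $t$,
\[
\Exp[\text{mistakes on the non-pivot pair}] \;\le\; \tfrac{3}{2}\cdot\tfrac13\sum_{e\in t}\mathbbm 1[e\in F]\cdot(\text{charge share}),
\]
where an $F$-edge pays $3/2$ in total spread over the events in which it is decided. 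Equivalently, we show that for every triangle the expected cost $\frac13\sum_{u\in t}\mathrm{cost}(u)$ is at most $\frac32\cdot\frac13\sum_{u\in t}|\{e\in F: e\ni u, e\subset t\}|$-weighted budget, in which the direct costs already paid are subtracted. The key structural fact is that $F$ is a cover. Hence a triangle with no $F$-edge is not bad, and a pivot triangle with no $F$-edges incurs no mistake since it is consistent. This is exactly where feasibility is used.

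The case analysis runs over the number of $F$-edges in $t$ (one, two or three), their signs, and the number of negative edges in $t$. If $t$ has zero $F$-edges and is not bad, there is no cost. With one $F$-edge $e$, the worst case is $t$ bad with $e$ its unique fix. We compute, averaging over the pivot, the expected mistake on the opposite pair and check that it fits the budget. For example, take $t$ positive except for $e$ negative with $e\in F$. With $e=vw$ and pivot $u$, both $v,w$ join, giving the mistake on $e$ with probability $1$. With pivot $v$, we include $w$ with probability $3/4$ and $u$ always, so the expected mistake is a combination of $1/4$ on $uw$ and so on. The totals must come out to at most $\frac32$ times the charge per $F$-edge. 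The probabilities $1/4$ and $3/4$ are presumably chosen exactly so that the tight cases balance at $3/2$.

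The main obstacle is setting up the charging so that the per-triangle inequality has the right form. The budget must avoid double-counting each $F$-edge over its many triangles while still allowing the direct $3/4$ costs. I would try defining the charge of $F$-edge $e$ as $\frac32$ times the probability that $e$ is decided (directly or inside a hit triangle). I would then verify the finite list of triangle configurations, with signs and $F$-membership, numerically by hand.
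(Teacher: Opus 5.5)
Your framework is the one the paper uses. You handle pairs decided directly against the pivot separately, and charge disagreements on the non-pivot pair of a triple to the probability that an $F$-pair is actually removed. Since every pair is removed exactly once, the total budget is at most $|F|$, and the theorem reduces to a finite per-triple check. As written, though, the proposal is a plan whose only nontrivial step is missing, and the accounting you set up before your last sentence would not work.

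\textbf{The accounting.} The disjointness you borrow from Ailon et al.\ is false for this algorithm. If $u$ is the pivot and $uv,uw\in E^-\setminus F$, neither $v$ nor $w$ joins $P_u$. So $vw$ survives a hit of $\{u,v,w\}$ and can later be hit again or decided directly. Even for a bad triangle with $uv,uw\in E^+\cap F$, $vw$ survives with probability $9/16$. The packing constraint therefore has to be stated for removal events, not hit events.

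Your per-triangle budget also double counts. It is built from $F$-edges incident to the pivot, with direct costs ``subtracted''. But such an edge is removed simultaneously in every hit triple that contains it, so it would be charged many times.

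The correct local statement is the one your final sentence points toward. Let $p_{xy}\in\{1,\frac14,0,\frac34\}$ be the inclusion probabilities. Set $d(v,w\mid u)=p_{uv}+p_{uw}-2p_{uv}p_{uw}$ if $vw\in E^+$, and $d(v,w\mid u)=p_{uv}p_{uw}$ if $vw\in E^-$. Set $b(v,w\mid u)=p_{uv}+p_{uw}-p_{uv}p_{uw}$ if $vw\in F$, and $0$ otherwise. The direct case is then just $d(u,v\mid v)=\frac34\le 1=b(u,v\mid v)$ for $uv\in F$, with nothing to subtract. What remains is to show, for every triple that is good or meets $F$, that the sum of $d$ over the three choices of pivot is at most $\frac32$ times the corresponding sum of $b$.

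\textbf{The missing verification.} That inequality is the entire content of the theorem, and you have not checked it. Saying the probabilities $\frac14,\frac34$ ``presumably'' balance is not a proof. The check covers 31 configurations: four sign patterns times eight subsets of $F$, minus the uncovered bad triangle.

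Your guess that the worst case is a bad triangle with a single fix is also incomplete. Ratio exactly $\frac32$ is attained by good triangles too:
\begin{itemize}
\item $+++$ with one $F$-edge gives $d$-sum $\frac34+\frac34+0$ and $b$-sum $1$.
\item One positive and two negative edges, with one negative edge in $F$, gives $d$-sum $\frac34+\frac34+0$ and $b$-sum $1$.
\end{itemize}
Since $F$ need not be minimal, these cases must be included. The proof is complete only after all cases are tabulated: every ratio is at most $\frac32$, and the trivial cases are $0/0$.
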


\ptitle{Proof sketch}
Our analysis of Algorithm~\ref{algo:piv} follows the same setup as \cite{ailon2008aggregating,chawla2015near,cohen2022correlation}. A key difference is that they charge the mistakes made by pivot to the LP values of the assigned edges. Instead, we assign a budget $b(u,v)=1$ for every $uv \in F$ and $b(u,v)=0$ if $uv \notin F$, and charge the mistakes to the budgets. Following a similar analysis, it suffices to prove that for every triplet $\{u,v,w\} \in {V \choose 3}$ it holds that\footnote{In Appendix we discuss the case when the denominator is zero.}
\begin{align}
\label{eq:costpivot}
\frac{d(u,v|w)+d(u,w|v)+d(v,w|u)}{b(u,v|w)+b(u,w|v)+b(v,w|u)} \leq \frac{3}{2}.
\end{align}
Here, $d(u,v|w)$ denotes the probability that
edge $uv$ becomes a disagreement when $w$ is the pivot. Likewise, $b(u,v|w)$ denotes the probability that edge $uv$ is removed from the graph when $w$ is the pivot (meaning, at least one of $u$ or $v$ is added to $P_w$) multiplied by the budget $b(u,v)$.

The proof follows from a case analysis. Because $G$ is complete, there are 4 types of triplets $\{u,v,w\}$ according to the number of positive edges in a triplet. For each type of triplet, we also have to consider all possibilities of its edges being part of $F$ or not. For each case, we verify that Eq.~(\ref{eq:costpivot}) holds.
The motivation behind Eq.~(\ref{eq:costpivot}) and the full case analysis can be found in Appendix.

\begin{algorithm}[t]
    \caption{Pivot-method for transforming a feasible \badt covers into clusters.}
    \label{algo:piv}
    \begin{algorithmic}[1]
    \STATE {\bfseries Input:} Complete signed graph $G=(V,E^+,E^-)$ and a feasible cover $F \subseteq {V\choose 2}$ for \badt.
    \STATE Mark all $v\in V$ as unclustered.
    \STATE $P \leftarrow \emptyset$.
    \WHILE{there are unclustered nodes}
        \STATE Pick a uniformly random pivot $u$ from all unclustered nodes.
        \STATE $P_u \leftarrow \{u\}$.
        \FOR{every $\{\text{unclustered } v \text{ }| \text{ }uv \in E^+\} $}
            \IF{$uv \in F$}
            \STATE Add $v$ to $P_u$ with probability $1/4$.
            \ELSE
            \STATE Add $v$ to $P_u$ with probability $1$.
            \ENDIF
        \ENDFOR
        \FOR{every $\{\text{unclustered } v \text{ }| \text{ }uv \in E^-\} $}
            \IF{$uv \in F$}
            \STATE Add $v$ to $P_u$ with probability $3/4$.
            \ELSE
            \STATE Add $v$ to $P_u$ with probability $0$.
            \ENDIF
        \ENDFOR
    
    \STATE Mark all $v \in P_u$ as clustered.
    \STATE $P \leftarrow P \cup \{P_u\}$.
    \ENDWHILE
    \STATE {\bfseries Return:} Partition $P$.
    \end{algorithmic}
\end{algorithm}

\section{Conclusions}
This paper makes several algorithmic contributions to the \emph{Bad Triangle Transversal} problem in signed graphs (\badt), which asks for the minimum number of edges to be deleted from a given signed graph, so that the remaining graph does not have a triangle with exactly one negative edge. Our contributions are both positive (simpler and faster algorithms), and negative (novel hardness results).

An interesting direction for future work is to obtain a better than 2-approximation for \badt on complete signed graphs, similarly like the recent results on Correlation Clustering (CC), or prove that such algorithm is unlikely to exist.
Additionally, answering the open question posed in Section~\ref{sec:connections} would further strengthen the connection between the \badt and CC problems.

To conclude, we mention that improved gap results over those stated by Theorem~\ref{thm:chlebik} for \md on bounded-degree instances, will directly give better constants in Theorem~\ref{thm:hardnesscomplete} and \ref{thm:cchardness}.
We are not aware of such results, but they might exist.

\section*{Acknowledgements}
This research is supported by the Academy of Finland project PAPADAM (371746) and by the Helsinki Institute for Information Technology (HIIT). We thank the anonymous reviewers for their helpful comments.

\section*{Impact Statement}
This paper presents work whose goal is to advance the field of Machine Learning. There are many potential societal consequences of our work, none which we feel must be specifically highlighted here.

\bibliography{example_paper}
\bibliographystyle{icml2026}

\newpage
\appendix
\onecolumn

\section{Proofs.}
\ptitle{Proof of Lemma~\ref{lem:kriv}}
\begin{proof}
See \cite{KRIVELEVICH1995281}. The proof uses complementary slackness, the fact that one can always efficiently find a bad triangle cover that uses at most $|E|/2$ edges, and a telescoping sum argument to upper bound $|C|$ by two times the total $\lptriangle$ cost of the initial graph $G$.
\end{proof}

\ptitle{Proof of Lemma~\ref{lem:det2}}
\begin{proof}
By the $\lptriangle$ constraints, every bad triangle for which the negative edge $e$ has $x_e=0$ needs to contain at least one positive edge with $x$ value at least a $1/2$. Hence, Algorithm~\ref{algo:det} returns a feasible cover.
Next, we prove the approximation guarantee.
Let $\opt$ be the optimal value.
Let $\tl$ be the triangles in $\tall$ for which the negative edge $e$ has $x_e>0$.
Let $\tzero$ be the triangles in $\tall$ for which the negative edge $e$ has $x_e=0$.
Clearly, $\tall=\tzero \cup \tl$.
Let $\{x_e\}$ be an optimal solution to $\lptriangle$ and $\{y_t\}$ be an optimal solution to PackingLP.

By complementary slackness we know that
\begin{align}
    \text{(a)} \text{ If }x_e>0, \text{ then }\sum_{t:e \in t}y_t = 1. \label{eq:comp1} \\
     \text{(b)} \text{ If }y_t>0, \text{ then }\sum_{e \in t} x_{e} = 1. \label{eq:comp2}
\end{align}
Using the condition (a) we write
\begin{equation}
\label{eq:1}
|\Eneg|=\sum_{e\in \Eneg}1 =\sum_{e\in \Eneg}\sum_{t: e \in t} y_t
=\sum_{t\in \tl} y_t.
\end{equation}

Similarly,
\[
|\Epos| =\sum_{e\in \Epos}1 = \sum_{e\in \Epos}\sum_{t: e \in t} y_t \\
 =\sum_{e\in \Epos}\Big(\sum_{t \in \tl: e \in t} y_t+\sum_{t \in \tzero: e \in t} y_t\Big). 
\]

Condition (b) implies that if $y_t>0$ for $t\in \tl$, then $t$ contains at most one positive edge from $\Epos$. 
Therefore,
\begin{equation}
\label{eq:2}
\sum_{e\in \Epos}\sum_{t \in \tl: e \in t} y_t \leq \sum_{t\in \tl} y_t.  
\end{equation}

It also holds that 
\begin{equation}
\label{eq:3}
\sum_{e\in \Epos}\sum_{t \in \tzero: e \in t} y_t \leq 2\sum_{t \in \tzero} y_t,\end{equation}
as each triangle $t \in \tzero$ is counted at most twice.
Adding Eq.~(\ref{eq:1}), Eq.~(\ref{eq:2}) and Eq.~(\ref{eq:3}) gives
\[
|\Eneg|+|\Epos| \leq 2(\sum_{t \in \tl}y_t +\sum_{t \in \tzero} y_t) \\
= 2\sum_{t \in \tall} y_t \leq 2\opt,
\]
where the last step uses LP duality.
\end{proof}

\ptitle{Proof of Lemma~\ref{lem:randalgo}}
\begin{proof}
Suppose a bad triangle with edges $a,b$ and $c$ is not covered by $C$. In other words, none of $a,b$ or $c$ is part of $C$. But then $x_a+x_b+x_c<2(r/2)+1-r=1$, which violates the $\lptriangle$ constraints.
Thus, Algorithm~\ref{algo:rand} returns a feasible cover.
The probability that a negative edge $e \in E^-$ is part of $C$ equals
\[
\text{Prob}[e \in C ] = \text{Prob}[r > 1-x_e] = 1-(1-x_e)=x_e
\]
The probability that a positive edge $e \in E^+$ is part of $C$ equals
\[
\text{Prob}[e \in C ] = \text{Prob}[r \leq 2x_e] = \left\{
\begin{array}{ll}
1 & \text{if } x_e > 1/2, \\
2x_e & \text{otherwise}.
\end{array}
\right.
\]
By linearity of expectation, we find that the expected size of $C$ is
\[
\Exp[|C|] \leq \sum_{e \in E^-}x_e + 2\sum_{e \in E^+}x_e \leq 2\opt.
\]
\end{proof}

\ptitle{Proof of Lemma~\ref{lem:intgap}}
\begin{proof}
We use the same example that is used to show an integrality gap for CC \cite{charikar2005clustering}.
Consider a clique of $n$ nodes with only negative edges between them. Add a new vertex that is joined to every node of the clique with positive edges. An optimal fractional solution will assign a value of $1/2$ to all the positive edges. An optimal integral solution will pick $n-1$ positive edges. The ratio $2(n-1)/n$ approaches two as $n$ increases.
\end{proof}

\ptitle{Proof of Theorem~\ref{thm:inapproxvc}}
\begin{proof}
Let $G$ be an unsigned graph for which we want to find a minimum vertex cover.
Create a signed graph $H$ by labeling all edges in $G$ as negative. Add a new node $u$ to $H$ and connect it to each node in $G$ with a positive edge.
We can convert a solution $F$ to \badt on $H$---which might use some negative edges---to a solution $P$ to \badt on $H$ with the following two properties: $(i)$ $P$ only contains positive edges, and $(ii)$ has size $|P| \leq |F|$. Indeed, for every negative edge $e \in F$ select either of the two positive edges that together with $e$ form a bad triangle, and add one of those positive edges to $P$.
This works because every negative edge in $H$ is a part of exactly one bad triangle, whereas every positive edge in $H$ can be part of multiple bad triangles.
$P$ naturally defines a vertex cover on $G$, by including in the cover the point $v$ for each edge in $(u,v) \in P$.
Because $P$ is a feasible solution to \badt on $H$, this gives a valid vertex cover of $G$.
Therefore, any $\alpha$-approximation for \badt leads to an $\alpha$-approximation for Vertex Cover.
\end{proof}

\ptitle{Proof of Lemma~\ref{lem:smalleropt}}
\begin{proof}
Given an optimal assignment to \md, we construct a feasible bad triangle cover on $G$ as follows. If variable $x$ in the assignment is set to true, then include in the cover all the edges from the even teeth of the $x$-hexagram. If $x$ is false, include all the edges from the odd teeth.
This requires $9$ edges per variable, and covers all the bad triangles that do not contain a clause edge.
Now consider a satisfied clause $C_\ell$.
There are two edges incident to the clause node $c_\ell$.
It suffices to include only one of these clause edges in the cover, for all the bad triangles formed by both clause edges to be covered. 
For an unsatisfied clause, we select both clause edges.
This gives a feasible cover.
Since an instance from Theorem~\ref{thm:chlebik} has $n$ variables and $2n$ clauses, we find
\[
\begin{split}
\opt(G) &\leq 9n + (2n-\optmd)+2\optmd = 11n+\optmd.
\end{split}
\]
\end{proof}

\ptitle{Proof of Lemma~\ref{lem:biggeropt}}

\begin{proof}
We show the existence of a \emph{consistent} optimal bad triangle cover. Recall that a consistent cover is one that for every hexagram contains either all even teeth edges or all odd teeth edges (and only those).
Since every negative edge in $G$ is part of at most one bad triangle, we may assume that we have an optimal cover $C_{\text{OPT}}$ that only uses positive edges.
We describe how to modify $C_{\text{OPT}}$ into a consistent cover of equal cost.

Consider the $x$-hexagram for any variable $x$.
The conditions of Theorem~\ref{thm:chlebik} state that $x$ appears exactly once as negated in the 2SAT formula.
Let $C_k$ be the clause which contains the literal $\overline{x}$.
Let $x_{\text{odd}} \in x\text{-hexagram}$ be the odd crown connected to clause node $c_k$.
There are two cases, depending on whether the edge $(x_{\text{odd}},c_k)$ is part of $C_{\text{OPT}}$ or not.

Case $(a)$: If $(x_{\text{odd}},c_k) \in C_{\text{OPT}}$, then by optimality $C_{\text{OPT}}$ contains all even-teeth edges from the $x$-hexagram, and only those. Thus, $C_{\text{OPT}}$ already consistently covers the $x$-hexagram.

Case $(b)$:  If $(x_{\text{odd}},c_k) \notin C_{\text{OPT}}$, then the two $x$-hexagram edges incident to
$x_{\text{odd}}$ must be part of $C_{\text{OPT}}$.
Covering a hexagram with 9 edges can only be done by selecting either the even or odd teeth edges.
Hence, if $C_{\text{OPT}}$ contains 9 $x$-hexagram edges, then necessarily $C_{\text{OPT}}$ must contain all odd-teeth edges from the $x$-hexagram, and only those.
If $C_{\text{OPT}}$ contains 10 edges from the $x$-hexagram, then we can modify $C_{\text{OPT}}$ as follows. Delete all $x$-hexagram edges from $C_{\text{OPT}}$. Add both $(x_{\text{odd}},c_k)$ and all even teeth edges from the $x$-hexagram to $C_{\text{OPT}}$.
This gives a feasible cover.
Since we have removed 10 edges and added 10 new edges to $C_{\text{OPT}}$ this remains optimal.
This argument also shows that by optimality, $C_{\text{OPT}}$ cannot contain more than 10 edges from the $x$-hexagram.
Figure~\ref{fig:transform} shows an example.

Repeating this modification for every variable $x$ will produce a consistent optimal cover.
A consistent optimal cover includes 9 edges per hexagram.
Consider any clause node $c_\ell$ and its two incident clause edges $(x_i,c_\ell)$ and $(c_\ell, y_j)$.
When neither the tooth containing $x_i$ nor the tooth containing $y_j$ is included in the cover (in other words, when the clause is not satisfied), then a consistent optimal cover must contain \emph{both} clause edges $(x_i,c_\ell)$ and $(c_\ell, y_j)$. 
Moreover, any cover must always pick one of these two clause edges, even for satisfied clauses, since $(x_i,c_\ell)$ and $(c_\ell, y_j)$ form a bad triangle with each other.
There are $n$ variables and $2n$ clauses. Hence,
\begin{align*}
\opt(G) &\geq  9n + (\# \text{satisfied clauses})+2(\#\text{unsatisfied clauses}) \geq 11n + \optmd.
\end{align*}
\end{proof}

\ptitle{Proof of Theorem~\ref{lem:clustmist}}
Our analysis of Algorithm~\ref{algo:piv} follows the same setup as \cite{ailon2008aggregating,chawla2015near,cohen2022correlation}. A key difference with their work is that they charge the mistakes made by pivot to the LP values of the assigned edges. 

Let $U_i$ be unmarked nodes at iteration $i$.
Let $D_i$ be the disagreements incurred during the iteration $i$. 
Let $B_i$ be the number of edges in $F$ removed during the iteration $i$. 

Let $X_{uvw}$ be a random boolean variable, equal to 1 iff $uv$ yields a disagreement, when $w$ is selected as a pivot.
Similarly, let $Y_{uvw}$ be a random boolean variable, equal to 1 iff $uv \in F$ and is removed, when $w$ is selected as a pivot.
Define
\[
    \dis(u, v \mid w) = \Exp[X_{uvw}] \quad\text{and}\quad
    \bud(u, v \mid w) = \Exp[Y_{uvw}].
\]

We claim two lemmas.
\begin{lemma}
\label{lem:cost1}
For any $u, v$, it holds that $\dis(u, v \mid v) \leq \bud(u, v \mid v)$.
\end{lemma}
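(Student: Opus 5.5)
This follows from a direct case analysis of what happens to the pair $uv$ when one of its own endpoints, say $v$, is the pivot. We condition on $u,v$ both being unclustered at that iteration and $v$ being chosen as pivot. Then $\dis(u,v\mid v)$ and $\bud(u,v\mid v)$ are the conditional probabilities, times the probability of this event. That common factor can be dropped, so we only need to compare the conditional quantities.

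Given $v$ is the pivot, the pair $uv$ is always removed from the graph: either $u$ joins $P_v$ or $u$ stays outside it, and in both cases the relation between $u$ and $v$ is fixed. Hence the conditional value of $Y_{uvv}$ is $1$ exactly when $uv\in F$, and $0$ otherwise. The disagreement indicator $X_{uvv}$ equals $1$ iff $uv\in E^+$ and $u\notin P_v$, or $uv\in E^-$ and $u\in P_v$. We split into four cases according to the sign of $uv$ and whether $uv\in F$.

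\begin{itemize}
\item $uv\in E^+\setminus F$: Algorithm~\ref{algo:piv} adds $u$ with probability $1$, so no disagreement occurs and $0\le 0$.
\item $uv\in E^-\setminus F$: $u$ is added with probability $0$, so again $0\le 0$.
\item $uv\in E^+\cap F$: $u$ is added with probability $1/4$, so the conditional disagreement probability is $3/4\le 1$.
\item $uv\in E^-\cap F$: $u$ is added with probability $3/4$, so the disagreement probability is $3/4\le 1$.
\end{itemize}

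In every case the conditional disagreement probability is at most the conditional budget. Multiplying by the probability of the conditioning event gives $\dis(u,v\mid v)\le\bud(u,v\mid v)$.

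There is no real obstacle here. The only point needing care is the convention for $\dis(u,v\mid w)$ and $\bud(u,v\mid w)$ as unconditional expectations. Both include the same event ``$u,v\in U_i$ and $v$ is the pivot'', so that event factors out identically on both sides. The budget also fires whenever $uv\in F$, because an edge incident to the pivot is always removed in that iteration.
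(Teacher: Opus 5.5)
Your proof is correct and follows the paper's argument: if $uv\in F$ then $\bud(u,v\mid v)=1$, because an edge incident to the pivot is always removed, and if $uv\notin F$ the decision on $u$ matches the sign of $uv$, so $\dis(u,v\mid v)=0$. Your explicit $3/4$ computations and the factoring-out of the conditioning event are harmless extras; in the paper $\dis$ and $\bud$ are already conditional on $w$ being the pivot, so no common factor needs to be dropped.
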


\begin{lemma}
\label{lem:cost2}
Assume a triangle $uvw$ that is either good or contain an edge in $F$, then
\[
    \dis(u, v \mid w) + \dis(u, w \mid v) + \dis(v, w \mid u) \leq \frac{3}{2}(\bud(u, v \mid w) + \bud(u, w \mid v) + \bud(v, w \mid u)).
\]
\end{lemma}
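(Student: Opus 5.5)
Since the three pivot events are evaluated on a single triple, the plan is to make $\dis$ and $\bud$ fully explicit and then run a finite case analysis. Fix the triangle $uvw$ and suppose $w$ is the pivot while $u,v$ are still unclustered. For a pair $wz$, let $p_{wz}$ be the probability that Algorithm~\ref{algo:piv} puts $z$ into $P_w$:
\[
p_{wz}=\begin{cases}1 & wz\in E^+\setminus F,\\ \tfrac14 & wz\in E^+\cap F,\\ \tfrac34 & wz\in E^-\cap F,\\ 0 & wz\in E^-\setminus F.\end{cases}
\]
The decisions for $u$ and $v$ are independent, so:
\begin{itemize}
\item $\bud(u,v\mid w)=\mathbbm{1}[uv\in F]\bigl(1-(1-p_{wu})(1-p_{wv})\bigr)$, since $uv$ is removed exactly when at least one endpoint joins $P_w$;
\item if $uv\in E^+$, then $\dis(u,v\mid w)=p_{wu}(1-p_{wv})+p_{wv}(1-p_{wu})$, since a disagreement means exactly one endpoint joins;
\item if $uv\in E^-$, then $\dis(u,v\mid w)=p_{wu}p_{wv}$, since a disagreement means both endpoints join.
\end{itemize}
Conditioning on the event that all of $u,v,w$ are unclustered when the first of them is chosen, each of the three vertices is the pivot with equal probability. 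So it suffices to prove the inequality for these closed-form expressions with a common factor removed.

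The claim then reduces to a finite check over sign patterns and subsets $F\cap\{uv,uw,vw\}$. By the symmetry of the triangle, there are four sign types: $+++$, $++-$, $+--$, $---$. Each type has at most $2^3$ choices of $F$-membership, which collapse to a few orbits under the automorphisms preserving the sign pattern. I would organise the cases as follows.

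First, take a good triangle with no edge in $F$. Then every $p\in\{0,1\}$ and the algorithm behaves like standard pivot. One checks directly that a good triangle incurs no disagreement under any choice of pivot: for $+++$ both endpoints join; for $+--$ the pattern is consistent whichever vertex pivots; for $---$ nothing joins. So both sides are $0$.

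Second, take any triangle with at least one edge in $F$. The hypothesis covers bad triangles precisely because $F$ is a cover, so every bad triangle falls here. For each orbit I would substitute the $p$-values into the three $\dis$ and three $\bud$ expressions and verify $\sum\dis\le\frac32\sum\bud$. Each side is a sum of at most three products of numbers in $\{0,\tfrac14,\tfrac34,1\}$, so this is routine arithmetic and can be tabulated.

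I expect the main obstacle to be the tight cases, where the probabilities $1/4$ and $3/4$ were tuned to give exactly $3/2$. The natural candidates are a bad triangle $++-$ with only one edge in $F$ (either the negative edge or one of the positive edges), and configurations where $\bud$ is small because the $F$-edge is rarely removed. For example, an $F$-edge $uv$ opposite a pivot $w$ with both $wu,wv\in E^-\setminus F$ contributes budget $0$ for that pivot. In these cases I would carefully list, for each of the three pivots, which edges can be removed and which can become disagreements, and confirm equality or slack.

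Degenerate configurations in which the right-hand side vanishes need separate care: these are the denominator-zero situations mentioned in the proof sketch. For them I would verify directly that the left-hand side is also $0$.
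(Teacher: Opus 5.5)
Your plan is the paper's proof. It uses the same inclusion probabilities $p_{wz}\in\{1,\tfrac14,\tfrac34,0\}$ and the same closed forms for $\dis$ and $\bud$, namely $\dis(u,v\mid w)=p_{wu}+p_{wv}-2p_{wu}p_{wv}$ or $p_{wu}p_{wv}$, and $\bud(u,v\mid w)=p_{wu}+p_{wv}-p_{wu}p_{wv}$ on $F$-edges. It then runs the same exhaustive check over sign patterns and $F$-memberships, which the paper records in Tables~\ref{tab:distable}--\ref{tab:ratiotable}.

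When you carry out the check, note two things. The ratio exactly $3/2$ is attained not only by bad triangles with one $F$-edge but also by $+++$ with one $F$-edge and by $+--$ with a single negative $F$-edge. The $0/0$ cases include, besides $F$-free good triangles, $---$ with one $F$-edge and $+--$ with only its positive edge in $F$, and in all of these the left-hand side is indeed $0$.
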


Before we prove the lemmas, let us show that they prove our claim.

The cost of our algorithm during the $i$th iteration is
\[
\begin{split}
    \Exp[D_i \mid U_i] & = \Exp[\Exp[D_i \mid U_i, w \text{ is a pivot}] \mid U_i] \\
    & = \Exp[\Exp[ \sum_{uv \in {U_t \choose 2}} X_{uvw} \mid U_i, w \text{ is a pivot}] \mid U_i] \\
    & = \frac{1}{|U_i|}\sum_{w \in U_i} \sum_{uv \in {U_i \choose 2}} \dis(u, v \mid w) \\
    & =  \frac{1}{|U_i|}\sum_{uvw \in {U_i \choose 3}} \dis(u, v \mid w) + \dis(u, w \mid v) + \dis(v, w \mid u)
     +  \frac{1}{|U_i|}\sum_{uv \in {U_i \choose 2}} \dis(u, v \mid v) + \dis(u, v \mid u) \\
    & \leq  \frac{3}{2|U_i|}\sum_{uvw \in {U_i \choose 3}} \bud(u, v \mid w) + \bud(u, w \mid v) + \bud(v, w \mid u)
     +  \frac{3}{2|U_i|}\sum_{uv \in {U_i \choose 2}} \bud(u, v \mid v) + \bud(u, v \mid u) \\
    & = \frac{3}{2|U_i|}\sum_{w \in U_i} \sum_{uv \in {U_i \choose 2}} \bud(u, v \mid w) \\
    & = \frac{3}{2}\Exp[\Exp[ \sum_{uv \in {U_t \choose 2}} Y_{uvw} \mid U_i, w \text{ is a pivot}] \mid U_i] \\
    & = \frac{3}{2}\Exp[\Exp[B_i \mid U_i, w \text{ is a pivot}] \mid U_i] = \frac{3}{2}\Exp[B_i \mid U_i]. \\
\end{split}
\]
Taking expectation over $U_i$, and summing over iterations proves the result.

\begin{table}
\caption{Values for $\dis(u, v \mid w) + \dis(u, w \mid v) + \dis(v, w \mid u)$.}
\label{tab:distable}
\begin{center}
\begin{tabular}{r rrrr rrrr}
\toprule
& \multicolumn{8}{l}{which edges of the triangle are in $F$} \\
\cmidrule{2-9}
Edge signs
& $---$ & $--+$ & $-+-$ & $+--$ & $-++$ & $++-$ & $+-+$ & $+++$ \\
\midrule
$---$ & 0 & 0 & 0 & 0 & 0.5625 & 0.5625 & 0.5625 & 1.688 \\
$--+$ & 0 & 0 & 1.5 & 1.5 & 0.9375 & 1.875 & 0.9375 & 0.75 \\
$-++$ & -- & 1.5 & 1.5 & 1.5 & 0.5625 & 1.125 & 1.125 & 1.312 \\
$+++$ & 0 & 1.5 & 1.5 & 1.5 & 1.875 & 1.875 & 1.875 & 1.125 \\
\bottomrule
\end{tabular}
\end{center}
\end{table}

\begin{table}
\caption{Values for $\bud(u, v \mid w) + \bud(u, w \mid v) + \bud(v, w \mid u)$.}
\label{tab:budtable}
\begin{center}
\begin{tabular}{r rrrr rrrr}
\toprule
& \multicolumn{8}{l}{which edges of the triangle are in $F$} \\
\cmidrule{2-9}
Edge signs
& $---$ & $--+$ & $-+-$ & $+--$ & $-++$ & $++-$ & $+-+$ & $+++$ \\
\midrule
$---$ & 0 & 0 & 0 & 0 & 1.5 & 1.5 & 1.5 & 2.812 \\
$--+$ & 0 & 0 & 1 & 1 & 1 & 2 & 1 & 2.562 \\
$-++$ & -- & 1 & 1 & 1 & 0.5 & 2 & 2 & 2.062 \\
$+++$ & 0 & 1 & 1 & 1 & 2 & 2 & 2 & 1.312 \\
\bottomrule
\end{tabular}
\end{center}
\end{table}

\begin{table}
\caption{Ratios of values given in Tables~\ref{tab:distable}--\ref{tab:budtable}.}
\label{tab:ratiotable}
\begin{center}
\begin{tabular}{r rrrr rrrr}
\toprule
& \multicolumn{8}{l}{which edges of the triangle are in $F$} \\
\cmidrule{2-9}
Edge signs
& $---$ & $--+$ & $-+-$ & $+--$ & $-++$ & $++-$ & $+-+$ & $+++$ \\
\midrule
$---$ & $0/0$ & $0/0$ & $0/0$ & $0/0$ & 0.375 & 0.375 & 0.375 & 0.6 \\
$--+$ & $0/0$ & $0/0$ & 1.5 & 1.5 & 0.9375 & 0.9375 & 0.9375 & 0.2927 \\
$-++$ & -- & 1.5 & 1.5 & 1.5 & 1.125 & 0.5625 & 0.5625 & 0.6364 \\
$+++$ & $0/0$ & 1.5 & 1.5 & 1.5 & 0.9375 & 0.9375 & 0.9375 & 0.8571 \\
\bottomrule
\end{tabular}
\end{center}
\end{table}

We will now prove the lemmas.

\begin{proof}[Proof of Lemma~\ref{lem:cost1}]
If $uv \in F$, then $\bud(u,v \mid v) = 1$ and the result is trivial.
Assume that $uv \notin F$. Then $uv$ will never generate disagreement, and $\dis(u, v \mid v) = 0$.
\end{proof}

\begin{proof}[Proof of Lemma~\ref{lem:cost2}]

Let $p_{uw}$ be the probability that $u$ is included in $P_w$, if $w$ is a pivot.

Then
\[
\dis(u,v \mid w) = 
\begin{cases}
p_{uw}+p_{vw}-2p_{uw}p_{vw}, & \text{if } uv \in E^+. \\
p_{uw}p_{vw}, & \text{if } uv \in E^-, \\
\end{cases}
\]
and
\[
\bud(u,v \mid w) = 
\begin{cases}
p_{uw}+p_{vw}-p_{uw}p_{vw}, & \text{if } uv \in F. \\
0, & \text{if } uv \notin F. \\
\end{cases}
\]

Note that both $\dis$ and $\bud$ depends on the sign of the edges of the triangle, and whether these edges belong to $F$.
The values of the triplet sums for $\dis$ and $\bud$ for all combinations are given in Tables~\ref{tab:distable}--\ref{tab:budtable}. Note that one entry is removed, as it represents a bad triangle not covered by $F$.

The ratios
\[
    \frac{\dis(u, v \mid w) + \dis(u, w \mid v) + \dis(v, w \mid u)}{\bud(u, v \mid w) + \bud(u, w \mid v) + \bud(v, w \mid u)}
\]
given in Table~\ref{tab:ratiotable} prove the claim.
\end{proof}

\end{document}